\newtheorem{definition}{Definition}
\newtheorem{theorem}{\textbf{Theorem}}[section]
\newtheorem{lemma}[theorem]{\textbf{Lemma}}
\newtheorem{assumption}[theorem]{Assumption}
\newtheorem{remark}{Remark}
\newcommand{\mas}{%
{multi-agent system}}
\newcommand{\ct}{%
{communication topology}}
\journal{}
\begin{document}

\begin{frontmatter}
\title{Distributed consensus of linear MASs with an unknown leader via a predictive extended state observer considering input delay and disturbances}

\author[label0]{Wei Jiang}
\ead{wjiang.lab@gmail.com}
\author[label1,label11,label111]{Zhaoxia Peng
}
 \ead{pengzhaoxia@buaa.edu.cn}
\author[label0]{Ahmed Rahmani}\ead{ahmed.rahmani@centralelille.fr}
\author[label0]{Wei Hu}\ead{huw0906@gmail.com}
\author[label2]{Guoguang Wen}\ead{guoguang.wen@bjtu.edu.cn}
\address[label0]{CRIStAL, UMR CNRS 9189, Ecole Centrale de Lille, Villeneuve d'Ascq, France}
\address[label1]{School of Transportation Science and Engineering, Beihang University, Beijing, 100191,P.R.China}
\address[label11]{Beijing Engineering Center for Clean Energy \& High Efficient Power, Beihang University, Beijing 100191, P.R.China}
\address[label111]{Laboratoire international associ$\acute{e}$, Beihang Universitym, P.R.China}
\address[label2]{Department of Mathematics, Beijing Jiaotong University, Beijing 100044, P.R.China}

%
%
%
\begin{abstract}
The problem of disturbance rejection/attenuation for constant-input delayed linear multi-agent systems (MASs) with the directed communication topology is tackled in this paper, where a classic model reduction technique is introduced to transform the delayed MAS into the delay-free one. First, when the leader has no control input, a novel adaptive predictive extended state observer (ESO) using only relative state information of neighboring agents is designed to achieve disturbance-rejected consensus tracking. The stabilization analysis is presented via the Lyapunov function and sufficient conditions are derived in terms of linear matrix inequalities. Then the result is extended to disturbance-attenuated case where the leader has bounded control input which is only known by a portion of followers. Finally, two numerical examples are presented to illustrate the effectiveness of proposed strategies. The main contribution focuses on the design of adaptive predictive ESO protocols with fully distributed property.
\end{abstract}

\begin{keyword}
Input delay\sep consensus control\sep multi-agent systems\sep extended state observer\sep linear matrix inequality.
\end{keyword}

\end{frontmatter}

\section{Introduction}
{D}{istributed} cooperative control has gained increased research attention due to its widely potential applications such as unmanned aerial vehicle formation, complex networks synchronization~\cite{wang10pinning}, satellite clusters and so on.
Among different kinds of cooperative control formats, consensus control, which aims at controlling all agents to achieve the same objective, has been investigated tremendously thanks to the impressive framework-building works~\cite{jadbabaie_coordination_2003,olfati-saber_consensus_2004,ren_consensus_2005}. After that, many consensus results have been presented from undirected to directed communication topology concerning different dynamics, for instance, fractional-order~\cite{bai2017distributed}, first-order, double integrator~\cite{wen2016dynamical,jiang2018asynchronous}, second-order, general linear~\cite{cheng2016convergence,wu2017event} and nonlinear dynamics. Particularly, Li~\cite{li2010consensus} presented a unified framework by expanding conventional observers to distributed observers, which is a significant breakthrough to solve the consensus problem and synchronization of complex networks. Then, the fully distributed consensus control, which needs no global information like the minimum eigenvalue of Laplacian matrix of communication topology, was proposed in~\cite{li_distributed_2013-1}. It is worth noting that the fully distributed property in the consensus controller is very meaningful as it is nearly impossible for each agent to know the Laplacian matrix for protocol calculating when the number of agents is very large. Except the work~\cite{olfati-saber_consensus_2004} with first-order and the work~\cite{wen2016dynamical,jiang2018asynchronous} with double integrator dynamics, which dealt with time-delay consensus problem , all the works above do not cover the disturbance or time-delay issues.

The external disturbance is widely existed in industrial control process and thus has been researched for decades. One of the fundamental idea is to design an observer mechanism to estimate those disturbances and then incorporate the designed observer into input controller to compensate the effects of disturbances.
Readers are recommended to survey papers~\cite{madonski2015survey,chen_disturbance-observer-based_2016} about extended state observers (ESOs) for disturbance attenuating and rejecting. In this paper, we embrace the same idea to design ESOs. The latest work concerning the fully distributed consensus tracking with disturbance rejection, is in~\cite{sun2016distributed} where the leader is treated without control input which means that the leader's dynamics is known once its initial state is known. In our work, we deal further with an unknown leader which can be interpreted with an existing unknown control input. What is more, the work~\cite{sun2016distributed} does not consider the time-delay issue which exists commonly in networked control systems and can deteriorate the system stability heavily. Our work also covers the control input delay problem which can be regarded as another improvement.

In terms of the time-delay system which is the longterm interest in control community because of its wide existence in reality, readers can refer to survey papers~\cite{richard2003time,gu2003survey} for an overview about this topic. In the MAS, time-delay mainly occurs inside the input, state, output or communication. This work deals with the constant input delay. Here two different approaches exist: memoryless (memory free) and memory controllers~\cite{lechappe2015new}. one of the typical memoryless approaches is the truncated prediction feedback (TPF) approach which is originally proposed by Lin~\cite{lin2007asymptotic} and further developed in~\cite{zhou2014consensus,yoon2013truncated} with the advantage that the requirement for integral computation can be removed.
However, it requires the open-loop system satisfying the following conditions: polynomially unstable (all eigenvalues of system matrix on the imaginary axis) with any large input delay~\cite{zhou2014consensus} or exponentially unstable with small enough input delay~\cite{yoon2013truncated}.
The disadvantage is that the control input is not utilized efficiently and sufficiently as the value of input is usually quite small due to the incorporation of low gain feedback technique. On the other hand, memory controller design which incorporates the computation of variables' historical values for prediction regains researchers' attention recently. As said in~\cite{mirkin2003every}, \textit{state prediction is a fundamental concept for delay systems, much like state observation is for systems with incomplete state measurements}. For linear systems, two popular approaches are the \textit{Smith predictor}~\cite{smith1957close} in frequency-domain and the \textit{reduction} technique in time-domain. With earlier impressive work in~\cite{mayne1968control,manitius1979finite,kwon1980feedback} and then systematically generalized in Artstein~\cite{artstein1982linear}, the reduction technique is frequently utilized in the control input delay system due to the fact that the system can be transformed into a delay-free one for the convenience of controller designing. This technique is thus used in this paper. Recently, the leaderless consensus~\cite{wang2016h} and leader-follower consensus~\cite{wang_consensus_2017-1,wangchunyanCyber} considering constant input delay and disturbances were investigated. The drawback is that the latter results assume the leader without control input (not an unknown leader). In addition, all the above three latest results are not fully distributed since the parameter designing inside control protocols is related to the Laplacian matrix of communication topology which is a piece of global information. Another newest work~\cite{ponomarev2017discrete} was the leaderless consensus using discrete-time predictor feedback technique. However, the disturbance which may deteriorate the controlling stability is not considered and the protocol is not fully distributed as well.

Based on the above analysis, the difficulty in this paper arises from the fully distributed protocol design with the unknown leader for consensus tracking control considering input delay and disturbances, where the corresponding solution is also the main contribution. To do this, a novel adaptive predictive ESO is first proposed to deal with the case of leader without control input. A linear matrix inequality (LMI) is obtained to prove the effectiveness of proposed controller. Then a modified protocol is introduced to tackle the unknown leader-follower consensus. 
To author's best knowledge, this is the first time proposing the fully distributed controller to tackle consensus tracking with an unknown leader under the directed communication topology considering input delay and disturbances.

\section{Preliminaries and model formulation}\label{section2}

\subsection{Mathematical preliminaries}

The connections between agents can be represented by a weighted graph $ \mathcal{G} = (\mathcal{V,E,A})$, where $\mathcal{V}$ and $\mathcal{E}$ denote the nodes and edges, respectively. $\mathcal{A} = [ a_{ij} ] \in \mathbb{R} ^{N \times N}$ denotes the adjacency matrix where $a_{ij}=1 $ if there exists a path from agent $ j $ to agent $ i $, and $a_{ij} = 0$ otherwise. An edge $\left(i,j\right) \in \mathcal{E}$ in graph $ \mathcal{G}$ means that agent $j$ can receive information from agent $i$ but not necessarily conversely.
The Laplacian matrix $\mathcal{L} = [ l_{ij} ] \in \mathbb{R} ^{N \times N}$ is normally defined as $l_{ii}= \sum_{j \neq i} a_{ij} $ and $l_{ij}= -a_{ij} $ when $ i \neq j$. 
A directed path from node $i$ to $j$ is a sequence of edges $\left(i,i_{1}\right),\left(i_{1},i_{2}\right), \ldots,\left(i_{k},j\right)$ with different nodes $i_{s}, s=1,2,\ldots, k$. 
A directed graph contains a directed spanning tree if there is a node from which a directed path exists to each other node.
More graph theories can be found in~\cite{godsil_algebraic_2001}.

The symbol $ \textbf{1} $ denotes a column vector with all entries being 1.
Matrix dimensions are supposed to be compatible if not explicitly stated.
The symbol $\otimes$ represents the Kronecker product and $ diag\{a_1, \ldots, a_{n}\} $ denotes a diagonal matrix with the diagonal entries being $  a_1, \ldots, a_{n}$.
The matrix $ A=[a_{ij}] \in \mathbb{R}^{N\times N} $ is called a nonsingular $ M $-matrix if $ a_{ij} \le 0, \forall i\neq j $, and all eigenvalues of $ A $ have positive real parts. Here, $ \lambda_{min}(A) $ and $ \lambda_{max} (A)$ represent the minimal and maximal eigenvalues of $ A $, respectively.
Suppose the eigenvalues of $ S\in \mathbb{R}^{n\times n} $ and $ T\in \mathbb{R}^{m\times m} $ are $ \lambda_{1},\ldots,\lambda_{n} $ and $ \mu_{1},\ldots, \mu_{m} $, respectively, then the eigenvalues of $ S\otimes T $ are $ \lambda_{i}\mu_{j}, i=1,\ldots,n, j=1,\ldots, m $.
If $ S \in \mathbb{R}^{n\times n}$ and $ T\in \mathbb{R}^{n\times n} $ are two symmetric positive definite matrices, then $ \lambda_{max}(ST)\le \lambda_{max}(S)\lambda_{max}(T) $.
$ det(A) $ is the determinant of a square matrix $ A $, and $ tr(A) $ is defined to be the sum of the elements on the main diagonal of $ A $. For a vector $ x $, denote $ \|x\| $ as its 2-norm. 
For any integer $ a \le b $, denote $ \textbf{I}[a,b]=\{a, a+1,\ldots, b\} $.
\begin{lemma}[\cite{qu_cooperative_2009}]\label{lemma_M_matrix}
	For a nonsingular $ M $-matrix $ A $, there exists a positive diagonal matrix $ G = diag(g_{1}, \ldots, g_{N}) > 0 $ such that $ GA+A^{T}G > 0 $.
\end{lemma}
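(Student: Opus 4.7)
The plan is to exploit a classical characterization of nonsingular $M$-matrices, namely that $A$ is a nonsingular $M$-matrix if and only if there exists a strictly positive vector $u>0$ with $Au>0$ (componentwise). I would apply this property to both $A$ and $A^{T}$, then construct $G$ explicitly by combining the two witnesses, and finally use a symmetry-plus-sign-pattern argument to conclude positive definiteness.

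First I would invoke the characterization to obtain $v>0$ with $Av>0$. Because the nonsingular $M$-matrix property is preserved under transposition (the off-diagonal sign pattern is unchanged and the eigenvalues of $A^{T}$ coincide with those of $A$), the same characterization applied to $A^{T}$ yields a vector $w>0$ with $A^{T}w>0$. The key definition is then
\[
G \;=\; \mathrm{diag}\!\left(\tfrac{w_{1}}{v_{1}},\ldots,\tfrac{w_{N}}{v_{N}}\right),
\]
so that $G>0$ and $Gv=w$ by construction. With this choice, a one-line computation gives
\[
(GA+A^{T}G)\,v \;=\; G(Av) + A^{T}(Gv) \;=\; G(Av) + A^{T}w,
\]
and the right-hand side is strictly positive componentwise, since $G>0$, $Av>0$, and $A^{T}w>0$.

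To finish, I would observe two facts about $M:=GA+A^{T}G$. By construction $M$ is symmetric; and for $i\neq j$ the off-diagonal entries $M_{ij}=g_{i}a_{ij}+g_{j}a_{ji}$ are nonpositive, since $a_{ij},a_{ji}\le 0$ (off-diagonals of an $M$-matrix) and $g_{i},g_{j}>0$. Thus $M$ is a symmetric $Z$-matrix that maps the strictly positive vector $v$ into the positive orthant, so by the standard equivalent definition $M$ is itself a (symmetric) nonsingular $M$-matrix. Every symmetric nonsingular $M$-matrix is positive definite (its real eigenvalues all have positive real part, hence are positive), which gives $GA+A^{T}G>0$ as required.

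The only real obstacle is citing the two $M$-matrix facts used (existence of a positive $v$ with $Av>0$, and the symmetric-$Z$-plus-positive-image criterion for positive definiteness); both are standard and already in the references the paper cites. If a self-contained argument were preferred, one could instead verify strict diagonal dominance of $M$ after the same choice of $G$, using the scaling $v$ to weight the row-sum inequality.
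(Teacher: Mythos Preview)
The paper does not actually prove this lemma; it is stated with a citation to \cite{qu_cooperative_2009} and used as a black box. The only additional information the paper provides is a concrete construction in the proof of Theorem~\ref{theorem_disturbance}, namely $[g_{1},\ldots,g_{N}]^{T}=(\mathcal{L}_{1}^{T})^{-1}\mathbf{1}$ for the specific matrix $\mathcal{L}_{1}$, again with a citation rather than a proof.

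Your argument is correct and is essentially the classical proof of this fact. The two $M$-matrix characterizations you invoke (existence of $v>0$ with $Av>0$, and the equivalence for symmetric $Z$-matrices between ``nonsingular $M$-matrix'' and ``positive definite'') are standard and appear, for instance, in Berman--Plemmons. The construction $G=\mathrm{diag}(w_{i}/v_{i})$ is exactly the right move, and the verification that $M=GA+A^{T}G$ is a symmetric $Z$-matrix sending a positive vector into the open positive orthant is clean. Note also that your construction specializes to the paper's explicit formula: taking $v=\mathbf{1}$ (valid whenever the row sums of $A$ are positive, as for $\mathcal{L}_{1}$ here) and $w=(A^{T})^{-1}\mathbf{1}$ gives precisely $g=(\mathcal{L}_{1}^{T})^{-1}\mathbf{1}$.
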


\begin{lemma}[\cite{bernstein_matrix_2009}]\label{lemma_pq}
	If $ a, b $ are nonnegative real numbers and $ p, q $ are positive real numbers satisfying $ \frac{1}{p}+\frac{1}{q}=1 $, then $ ab \leq \frac{a^{p}}{p}+\frac{b^{q}}{q} $.
\end{lemma}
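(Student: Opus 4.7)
The plan is to first dispose of the trivial case $a=0$ or $b=0$, where the left-hand side vanishes while the right-hand side is manifestly nonnegative. For the remaining case $a,b>0$, note that $\tfrac{1}{p}+\tfrac{1}{q}=1$ with $p,q>0$ forces $p,q>1$, so $\tfrac{1}{p}$ and $\tfrac{1}{q}$ are convex-combination weights. I would then invoke the convexity of the exponential map (equivalently, the concavity of $\ln$) as a two-point Jensen inequality.

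Concretely, I would write $ab=\exp(\ln a+\ln b)=\exp\!\left(\tfrac{1}{p}(p\ln a)+\tfrac{1}{q}(q\ln b)\right)$ and apply Jensen to obtain $ab\le \tfrac{1}{p}e^{p\ln a}+\tfrac{1}{q}e^{q\ln b}=\tfrac{a^{p}}{p}+\tfrac{b^{q}}{q}$, which is the desired bound. Equality holds precisely when $p\ln a=q\ln b$, i.e.\ $a^{p}=b^{q}$.

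A purely calculus-based alternative is to fix $b>0$ and minimise $f(a)=\tfrac{a^{p}}{p}+\tfrac{b^{q}}{q}-ab$ on $[0,\infty)$. Setting $f'(a)=a^{p-1}-b=0$ yields the critical point $a_{\ast}=b^{1/(p-1)}$; using the identity $(p-1)q=p$ (immediate from $\tfrac{1}{p}+\tfrac{1}{q}=1$) one checks $f(a_{\ast})=0$, while $f''(a)=(p-1)a^{p-2}>0$ on $(0,\infty)$ certifies this as a global minimum, so $f\ge 0$ throughout. There is no substantive obstacle here: the only bookkeeping that needs a moment of attention is the exponent identity $(p-1)q=p$, which repackages the conjugacy relation between $p$ and $q$ into the exact form needed by the calculation.
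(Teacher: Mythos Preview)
Your proof is correct; both the Jensen-inequality argument and the calculus minimisation are standard, complete proofs of Young's inequality. However, there is nothing to compare against: the paper does not supply a proof of this lemma at all. It is stated with a citation to Bernstein's \emph{Matrix Mathematics} and used as a black box in the Lyapunov estimates (equations~\eqref{inequality} and the bound on $2\tilde{Z}^{T}(\mathcal{L}^{-1}_{1}\otimes Q\tilde{K})\hat{e}$). So your proposal goes beyond what the paper does, not against it.
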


\begin{lemma}[\cite{corless1981continuous}]\label{lemma_kappa_function}
	For a system $ \dot{x}=f(x,t) $ where $ f(\cdot) $ is locally Lipschitz in $ x $ and piecewise continuous in $ t $,  suppose that there exists a continuously differentiable function $ V(x,t) \ge 0 $ satisfying
	\begin{equation*}
	\begin{aligned}
	\mathcal{K}_{1}(\|x\|) \le V(x,t) \le & \mathcal{K}_{1}(\|x\|)\\
	\dot{V}(x,t) \le& -\mathcal{K}_{3}(\|x\|) + \Xi
	\end{aligned}
	\end{equation*}
	where $ \Xi >0 $ is a constant, $ \mathcal{K}_{1}, \mathcal{K}_{2} $ belong to class $ \mathcal{K}_{\infty} $ functions, and $ \mathcal{K}_{3} $ belongs to class $ \mathcal{K} $ function. Then, the solution $ x(t) $ of $ \dot{x}=f(x,t) $ is uniformly ultimately bounded.
\end{lemma}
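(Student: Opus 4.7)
The plan is to run the classical ultimate-boundedness argument: identify the radius inside which $\Xi$ can dominate $\mathcal{K}_3(\|x\|)$, build a sublevel set of $V$ that properly contains a ball of that radius, and show this sublevel set is both forward invariant and finite-time attracting. I read the hypothesis as $\mathcal{K}_1(\|x\|) \le V(x,t) \le \mathcal{K}_2(\|x\|)$, correcting the apparent typo in the displayed inequality.

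Concretely, set $r := \mathcal{K}_3^{-1}(\Xi)$, which is well-defined because $\mathcal{K}_3$ is strictly increasing on its domain; for every $\|x\| > r$ the hypothesis yields $\dot V(x,t) < 0$. Next, for any $c > \mathcal{K}_2(r)$, introduce the time-dependent sublevel set $\Omega_c(t) := \{x : V(x,t) \le c\}$. The sandwich bounds give $\{\|x\| \le \mathcal{K}_2^{-1}(c)\} \subseteq \Omega_c(t) \subseteq \{\|x\| \le \mathcal{K}_1^{-1}(c)\}$, so $\Omega_c(t)$ properly contains the closed ball of radius $r$. On its boundary, $\mathcal{K}_2(\|x\|) \ge V(x,t) = c > \mathcal{K}_2(r)$ forces $\|x\| > r$ and hence $\dot V < 0$, which establishes forward invariance.

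For a trajectory starting outside $\Omega_c(t_0)$, while it remains outside one has $\|x\| \ge \mathcal{K}_2^{-1}(c) > r$, which yields the uniform decay $\dot V \le -(\mathcal{K}_3(\mathcal{K}_2^{-1}(c)) - \Xi) < 0$. Integrating shows $V$ reaches the value $c$ in finite time, after which invariance traps the trajectory; combined with the sandwich this produces the ultimate bound $\|x\| \le \mathcal{K}_1^{-1}(c)$. Letting $c$ decrease to $\mathcal{K}_2(r)$ gives the explicit ultimate bound $\mathcal{K}_1^{-1}(\mathcal{K}_2(\mathcal{K}_3^{-1}(\Xi)))$.

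The main technical concern is the time dependence of $V$: the sublevel set $\Omega_c(t)$ drifts with $t$, so one must verify that the invariance and attraction arguments are truly uniform. Because both envelopes $\mathcal{K}_1, \mathcal{K}_2$ and the constant $\Xi$ are time-independent, the sandwich characterization of $\Omega_c(t)$ is uniform in $t$, which is exactly what promotes \emph{eventual} boundedness to \emph{uniform} ultimate boundedness as demanded by the conclusion; local existence combined with the a priori estimate $\|x(t)\| \le \mathcal{K}_1^{-1}(V(x(t_0),t_0))$ (valid whenever $V$ is non-increasing, which holds outside the ball of radius $r$) rules out finite-time escape before entry into $\Omega_c(t)$.
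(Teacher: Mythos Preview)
The paper does not prove this lemma at all; it is quoted as a known result from Corless and Leitmann (1981) and used as a black box in the proof of Theorem~3.2. So there is no ``paper's own proof'' to compare against, and what you have written is the standard Lyapunov sublevel-set argument one finds in, e.g., Khalil's textbook.

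Your argument is essentially correct, with one technical caveat worth flagging. You set $r := \mathcal{K}_3^{-1}(\Xi)$ and justify this by ``$\mathcal{K}_3$ is strictly increasing on its domain,'' but strict monotonicity only guarantees injectivity, not surjectivity onto $[0,\infty)$. Since the hypothesis places $\mathcal{K}_3$ merely in class~$\mathcal{K}$ (not $\mathcal{K}_\infty$), the constant $\Xi$ need not lie in its range; if $\Xi \ge \sup_{s\ge 0}\mathcal{K}_3(s)$ then the differential inequality never forces $\dot V < 0$ and the conclusion fails outright. The lemma as stated therefore carries the implicit assumption $\Xi < \sup \mathcal{K}_3$, and the honest fix is to take any $r$ with $\mathcal{K}_3(r) > \Xi$ rather than the exact preimage. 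With that adjustment the rest of your invariance-and-finite-time-attraction argument goes through unchanged, and your handling of the time dependence of $V$ via the time-uniform sandwich is exactly the point that upgrades boundedness to \emph{uniform} ultimate boundedness.
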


\begin{lemma}[\cite{khalil1996noninear}]\label{lemma_dotV_V}
	If a real function $ V(t) $ satisfies $ \dot{V}(t) \le -aV(t)+b $, where $ a, b $ are positive constants, then
	$$
	V(t)\le (V(0)-\frac{b}{a})e^{-at}+\frac{b}{a}.
	$$ 
\end{lemma}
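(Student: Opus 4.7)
The plan is to reduce the scalar differential inequality to a linear first-order comparison by shifting the equilibrium. First I would introduce the auxiliary function $W(t) = V(t) - b/a$, which converts the hypothesis into the homogeneous form
\begin{equation*}
\dot{W}(t) = \dot{V}(t) \le -aV(t) + b = -a\left(V(t) - \frac{b}{a}\right) = -aW(t).
\end{equation*}
The shift works because $b/a$ is the unique fixed point of the associated linear ODE $\dot{y} = -ay + b$, so the inhomogeneous inequality becomes a purely decaying one after subtracting this constant.

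Next I would multiply through by the integrating factor $e^{at}$ to obtain $\tfrac{d}{dt}\bigl(e^{at}W(t)\bigr) \le 0$, which shows that $e^{at}W(t)$ is non-increasing in $t$. Integrating from $0$ to $t$ yields $e^{at}W(t) \le W(0)$, i.e. $W(t) \le W(0) e^{-at}$. Re-substituting $W = V - b/a$ then gives
\begin{equation*}
V(t) \le \left(V(0) - \frac{b}{a}\right) e^{-at} + \frac{b}{a},
\end{equation*}
which is exactly the stated bound.

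The only subtlety, and therefore the main obstacle, is justifying the integrating-factor step when $V$ is not assumed to be continuously differentiable but only satisfies the differential inequality in an a.e. or Dini sense (which is typical when $V$ arises as a Lyapunov function along trajectories of $\dot{x}=f(x,t)$ with $f$ merely locally Lipschitz). This is handled by the standard comparison lemma argument (as in Khalil's textbook): one introduces the solution $U(t)$ of the associated scalar ODE $\dot{U}=-aU+b$ with $U(0)=V(0)$, shows that the difference $V(t)-U(t)$ cannot become positive by a contradiction using the sign of $\dot{V}-\dot{U}$ at a hypothetical first crossing, and then uses the explicit solution of $U$. Once that comparison step is in place, the closed-form bound for $V$ follows directly from the closed-form solution of the linear ODE.
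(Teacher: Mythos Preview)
Your argument is correct and is the standard integrating-factor / comparison-lemma proof of this inequality. Note, however, that the paper does not supply its own proof of this lemma: it is stated as a preliminary result and simply cited from Khalil's textbook, so there is nothing in the paper to compare your approach against.
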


\subsection{Model formulation}
In this subsection, a group of $N+1$ agents with identical linear dynamics is described as
\begin{equation}\label{eq:dynamics}
\dot x_{i}(t)  =  Ax_{i}(t)+Bu_{i}(t-\tau)+Ew_{i}(t),\quad i \in \textbf{I}[0,N]
\end{equation}
where $x_{i}(t)=[x_{i1}(t), \ldots, x_{in}(t)]^{T} \in \mathbb{R} ^{n}$ and $u_{i}\left(t\right) \in \mathbb{R} ^{p}$ are the state, control input of the $ i $-th agent, respectively. $ A \in \mathbb{R} ^{n \times n}, B \in \mathbb{R} ^{n \times p} $ and $ E \in \mathbb{R} ^{n \times s} $ are constant matrices. $ \tau $ is the system's control input delay. $ w_{i}(t) \in \mathbb{R}^{s} $ is the corresponding external disturbance which is generated by the following exosystem
\begin{equation}\label{w_disturbance}
\dot{w}_{i}(t) = S w_{i}(t),\quad i \in \textbf{I}[0,N]
\end{equation}
with $ S \in \mathbb{R} ^{s \times s} $ being a known constant matrix.

\begin{assumption}\label{assumption_tau}
	$ \tau >0 $ is constant and known.
\end{assumption}
\begin{remark}
	Here the delay variable $ \tau $ is taken as fixed and identical for the written convenience, but it can be extended to time-varying input delay as long as we know the upper bound of the delay. The work on unknwon time-varying input delay by the observer estimating technique is undergoing.
\end{remark}
\begin{assumption}\label{assumptionABC}
	$ (A,B)  $ is controllable. 
\end{assumption}
Without loss of generality, suppose that agents in \eqref{eq:dynamics} indexed by $ 1, \ldots, N $ are the followers denoted as $ \mathbb{F} \triangleq \left\lbrace 1,\ldots,N\right\rbrace  $ and the agent indexed by $ 0 $ is the leader which receives no information from the followers. Note that the leader's state information is only available to a subset of followers.
The leader is regarded without the control input in Subsection~\ref{subsec_consensus_without_leader_input}, i.e., $ u_{0}(t)=0 $, which is a common assumption in the existing works on distributed cooperative control of linear MASs \cite{wang_consensus_2017-1,wen_containment_2016,ding2015consensus,wangchunyanCyber}. 


However, as we know, where the whole \mas{} moves is decided by the leader and that is why the leader exists. Then where will the leader move? The answer is that a desired dynamic trajectory command is given to the leader to ask the leader to finish the desired trajectory tracking or that the leader moves wherever it can, which requires the leader's control input to be nonzero.
But if the leader always has no control input, it means the leader is a virtual one and hence its tracking ability has severe limitations because of the equation $ \dot x_{0}(t)  =  Ax_{0}(t)+Ew_{0}(t) $ as the system matrices $ A $ and $ E $ are unchangeable with $ w_{0}(t) $ being the leader's external disturbance.
In real applications, the leader needs to regulate the final consensus trajectory. So its control input $ u_{0}(t) $ will not be affected by followers. 
In Subsection~\ref{subsec_consensus_with_leader_input}, we deal with the disturbance-attenuating consensus control in a fully distributed fashion considering the leader's input satisfying the following assumption, which is more difficult than the case of $ u_{0}(t)=0 $.

\begin{assumption}\label{leader input}
	The leader's control input satisfies that $ \|u_{0}(t)\| \le \epsilon $, where $ \epsilon $ is a positive constant.
\end{assumption}
\begin{assumption}\label{assumptiondirected}
	The graph $ \mathcal{G} $ contains a directed spanning tree where the leader acts as the root node.
\end{assumption}

Then the Laplacian matrix of $ \mathcal{G} $ can be partitioned as $ \mathcal{L} =  
\begin{bmatrix}
0 & 0_{1 \times N} \\
\mathcal{L}_{2} & \mathcal{L}_{1}  
\end{bmatrix} $, where $ \mathcal{L}_{1} \in \mathbb{R}^{N \times N} ,  \mathcal{L}_{2} \in \mathbb{R}^{N \times 1}$. Under Assumption~\ref{assumptiondirected}, all the eigenvalues of $  \mathcal{L}_{1} $ have positive real parts \cite{cao_distributed_2012}. It is also easy to confirm that $ \mathcal{L}_{1} $ is a nonsingular $ M $-matrix~\cite{qu_cooperative_2009}.

\begin{assumption}\label{assum_SD}
	There exists a matrix $ F\in \mathbb{R} ^{p \times s} $ such that $ E=BF $, meaning that the disturbance is matched. The eigenvalues of $ S $ are distinct and on the imaginary axis. 
	$ (S,E) $ is observable.
\end{assumption}

The assumption of eigenvalues of $ S $ assures the external disturbance $ w_{i}(t), i\in \textbf{I}[0,N] $ to be the non-vanishing harmonic disturbance including constants and sinusoidal functions,
which is commonly used for output regulation and disturbance rejection. 
In addition, the matched disturbances could be relaxed and be transformed to unmatched ones in some circumstances~\cite{isidori1995nonlinear}.
The detailed explanation of Assumption~\ref{assum_SD} can be refered to the Remark 1 in~\cite{ding2015consensus}.

If $ w_{i}(t), i\in \textbf{I}[0,N] $ is known, the disturbance rejection is quite straightforward by adding the term $ -F(w_{i}(t)-w_{0}(t)) $ in the control input $ u_{i}(t) $. The key issue here is to design fully distributed observers to estimate those unknown disturbances under the directed communication topology $ \mathcal{G} $ satisfying Assumption~\ref{assumptiondirected}. The disturbance state $ w_{i}(t) $ is expected to be observable from the system state measurement $ x_{i}(t), i\in \textbf{I}[0,N] $. For this purpose, inspired by~\cite{ding2015consensus}, we propose the following lemma.
\begin{lemma}\label{lemma_AeAtau}
	If $ (S,E) $ is observable, then the pair $ (A_{T},T) $ is observable, with $ A_{T}=\begin{bmatrix}
	A & e^{A\tau}E \\
	0 & S 
	\end{bmatrix} $ and $ T=\begin{bmatrix}
	I & 0  
	\end{bmatrix} $.
\end{lemma}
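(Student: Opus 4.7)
The plan is to apply the Popov--Belevitch--Hautus (PBH) rank test for observability. Specifically, $(A_T,T)$ is observable if and only if the matrix
\begin{equation*}
M(\lambda) \;=\; \begin{bmatrix} \lambda I - A_T \\ T \end{bmatrix} \;=\; \begin{bmatrix} \lambda I - A & -e^{A\tau}E \\ 0 & \lambda I - S \\ I & 0 \end{bmatrix}
\end{equation*}
has full column rank $n+s$ for every $\lambda \in \mathbb{C}$. I would verify this directly by chasing the null space of $M(\lambda)$.

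First, take any vector $v = [v_1^{T},\; v_2^{T}]^{T}$ with $v_1 \in \mathbb{C}^n$, $v_2 \in \mathbb{C}^s$ satisfying $M(\lambda)v = 0$. The bottom block $T v = v_1$ immediately forces $v_1 = 0$. Substituting back, the remaining conditions collapse to
\begin{equation*}
-e^{A\tau} E v_2 = 0, \qquad (\lambda I - S)v_2 = 0.
\end{equation*}

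The key observation is that $e^{A\tau}$ is always nonsingular (its inverse is $e^{-A\tau}$), so the first equation is equivalent to $E v_2 = 0$. Combined with $(\lambda I - S)v_2 = 0$, this means $v_2$ lies in the null space of $[(\lambda I - S)^T,\; E^T]^T$. By the PBH test applied to the pair $(S,E)$, observability of $(S,E)$ ensures that this matrix has full column rank for every $\lambda$, so $v_2 = 0$. Hence $v = 0$, $M(\lambda)$ has full column rank for all $\lambda$, and $(A_T, T)$ is observable.

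There is no real obstacle in this argument; the only point that truly uses the delay structure is the invertibility of $e^{A\tau}$, which is automatic. The block-triangular form of $A_T$ together with $T = [I,\; 0]$ makes the PBH test reduce cleanly to that of $(S,E)$, so the lemma essentially inherits observability from the exosystem pair.
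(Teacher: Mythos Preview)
Your proof is correct and follows essentially the same route as the paper: both use the PBH rank test, exploit $T=[I\;0]$ to force the first block of any null vector to vanish, and then invoke the invertibility of $e^{A\tau}$ to reduce the remaining condition to the PBH test for $(S,E)$. The paper phrases this by contradiction (assuming rank deficiency at an eigenvalue of $A_T$), while you argue directly that the null space of $M(\lambda)$ is trivial for all $\lambda$, but the substance is identical.
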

\begin{proof}
	Let us prove the result by seeking a contradiction. Assume that $ (A_{T},T) $ is not observable, for any eigenvalue of $ A_{T} $, i.e., $ \lambda_{i} $, the matrix$$
	\begin{bmatrix}
	\lambda_{i}I-A & -e^{A\tau}E \\
	0 & \lambda_{i}I-S \\
	I & 0
	\end{bmatrix}
	$$
	is rank deficient, i.e., there exists a nonzero vector $ \eta=[\eta_{1}^{T}, \eta_{2}^{T}]^{T} \in \mathbb{R} ^{(n+s)} $ such that
	$$
	\begin{bmatrix}
	\lambda_{i}I-A & -e^{A\tau}E \\
	0 & \lambda_{i}I-S \\
	I & 0
	\end{bmatrix}\begin{bmatrix}
	\eta_{1} \\
	\eta_{2}
	\end{bmatrix}=0.
	$$
	This implies that
	\begin{equation}\label{AT}
	\eta_{1}=0, \, \begin{bmatrix}
	-e^{A\tau}E \\
	\lambda_{i}I-S \\
	\end{bmatrix}\eta_{2}=0.
	\end{equation}
	Since $ \eta_{1}=0 $, we get $ \eta_{2} \neq 0 $. 
	
	It is known that $ det(e^{A\tau}) = e^{tr(A\tau)} >0 $, which means $ e^{A\tau} $ is invertible, i.e., $ rank(e^{A\tau})=n $. From $ -e^{A\tau}E\eta_{2}=0 $ in~\eqref{AT} we have $ \begin{bmatrix}
	-E \\
	\lambda_{i}I-S \\
	\end{bmatrix}\eta_{2}=0 $, which implies, together with $ \eta_{2} \neq 0 $, that $ (S,E) $ is not observable. This is a contradiction, meaning that $ (A_{T},T) $ must be observable.
\end{proof}

Since $ (A_{T},T) $ is observable, there exists a positive definite matrix $ P $ that satisfies the following LMI
\begin{equation}\label{eq:ARE}
PA_{T}+A_{T}^{T}P-2T^{T}T < 0.
\end{equation}

\section{Main results}\label{mainresults}

This section mainly focuses on how to design fully distributed adaptive protocols to address consensus tracking problems considering input delay and disturbances with the directed communication topology. 
Subsection~\ref{subsec_consensus_without_leader_input} solves the consensus tracking problem with the leader of no control input based only on relative state measurements. 
After that, the extended case of the leader with bounded input is studied in Subsection~\ref{subsec_consensus_with_leader_input}.


The control goal here is to design fully distributed protocols to make followers track the leader based only on relative states under the directed communication topology $ \mathcal{G} $.
To do this, define the consensus tracking error for follower $ i $ as $ \tilde x_{i}(t)=x_{i}(t)-x_{0}(t) $. The objective here is to prove the convergence of $ \tilde{x}_{i}(t)$ for any initial state $ x_{0}(0) $ and $x_{i}(0) , i \in \mathbb{F} $.

\subsection{Consensus tracking control without $ u_{0}(t) $}\label{subsec_consensus_without_leader_input}

The dynamics of $ \tilde{x}_{i}(t) $ is
\begin{equation}\label{x_tilde}
\dot{\tilde{x}}_{i}(t)  = A\tilde{x}_{i}(t)+Bu_{i}(t-\tau)+E\bar{w}_{i}(t), i \in \mathbb{F}
\end{equation}
where $ \bar{w}_{i}(t)=w_{i}(t)-w_{0}(t) $.
Here, we concern about the disturbance rejection $ \bar{w}_{i}(t) $ and control input delay $ u_{i}(t-\tau) $.

Firstly, if there is no input delay and suppose the disturbance $ w_{i}, i \in \textbf{I}[0,N] $ is known, the method of disturbance rejection is quite easy by adding a term $ -F\bar{w}_{i}(t) $ in $ u_{i}, i \in \mathbb{F} $. So the key technique is to estimate $ \bar{w}_{i}(t) $ by designing a fully distributed observer $ \hat{w}_{i}(t) $. This is one of main contributions in this paper and will be explained in detail later.

Then, in terms of input delay $ u_{i}(t-\tau), i \in \mathbb{F} $, inspired by the reduction technique~\cite{artstein1982linear,lechappe2015new} which can be utilized and modified to transform the system~\eqref{eq:dynamics} with a delayed input into a delay-free system, the variable transformation for each follower $ i $ is designed as follows
\begin{equation}\label{z_transformed_variable}
\tilde{Z}_{i}(t)
= e^{A\tau}\tilde{x}_{i}(t)
+ \int_{t-\tau}^{t} e^{A(t -s)}[Bu_{i}(s) + Ee^{S\tau}\hat{w}_{i}(s)]ds.
\end{equation}
\begin{remark}
	Here, the link between the consensus tracking error $ \tilde{x}_{i}(t) $ and transformed variable $ \tilde{Z}_{i}(t) $ is established, which is one of the main difficulties in this paper.
\end{remark}

Let us define an augmented state $ Z_{i}(t)=[\tilde{Z}_{i}(t)^{T},\bar{w}_{i}(t)^{T}]^{T} $ and apply the transformation~\eqref{z_transformed_variable} on system~\eqref{x_tilde}, then
\begin{equation}\label{eso}
\begin{aligned}
\dot{Z}_{i}(t)=&\underbrace{\begin{bmatrix}
	A & e^{A\tau}E \\
	0 & S 
	\end{bmatrix}}_{A_{T}}Z_{i}(t)+\underbrace{\begin{bmatrix}
	B \\
	0 
	\end{bmatrix}}_{\bar{B}}u_{i}(t) +\begin{bmatrix}
Ee^{S\tau} \\
0 
\end{bmatrix}\hat{w}_{i}(t)- \begin{bmatrix}
e^{A\tau}Ee^{S\tau} \\
0 
\end{bmatrix}\tilde{w}_{i}(t-\tau)
\end{aligned}
\end{equation}
where $ A_{T}\in  \mathbb{R}^{(n+s) \times (n+s)}, \bar{B} \in  \mathbb{R}^{(n+s) \times p} $.

The idea is to design the fully distributed adaptive ESO as $ \bar{Z}_{i}(t)=[v_{i}(t)^{T}, \hat{w}_{i}(t)^{T}]^{T}, i\in \mathbb{F} $ to estimate the extended state $ Z_{i}(t)=[\tilde{Z}_{i}(t)^{T},\bar{w}_{i}(t)^{T}]^{T} $, which will be elaborated in detail in the following.
According to~\eqref{eso}, the control input for each follower $ i $ could be designed as
\begin{equation}\label{input}
u_{i}(t)=(\underbrace{\begin{bmatrix}
	K_{1} & 0 
	\end{bmatrix}}_{\bar{K}_{1}}
-\underbrace{\begin{bmatrix}
	0 & Fe^{S\tau} 
	\end{bmatrix}}_{\bar{F}}
)\bar{Z}_{i}(t), i \in \mathbb{F}
\end{equation}
such that
\begin{equation}\label{dot_Z_2}
\dot{\tilde{Z}}_{i}(t)=(A+BK_{1})\tilde{Z}_{i}(t)+BK_{1}\tilde{v}_{i}(t)-e^{A\tau}Ee^{S\tau}\tilde{w}_{i}(t-\tau)
\end{equation}
where $ \tilde{v}_{i}(t)=v_{i}(t)-\tilde{Z}_{i}(t) $ and $ \tilde{w}_{i}(t)=\hat{w}_{i}(t)-\bar{w}_{i}(t) $ are observer estimating errors,
and $ K_{1}\in \mathbb{R}^{p \times n} $ is a constant matrix to be designed later. 

On the other hand, where is the link among consensus tracking error $ \tilde{x}_{i}(t) $, transformed variable $ \tilde{Z}_{i}(t) $, the ESO $ \bar{Z}_{i}(t) $ and the designed control input $ u_{i}(t) $? The answer is to substitute the designed control input~\eqref{input} into the transformed delay-free system~\eqref{z_transformed_variable}, then
\begin{equation}\label{most_important_equation}
\tilde{Z}_{i}(t)=e^{A\tau}\tilde{x}_{i}(t)   + \int_{t-\tau}^{t} e^{A(t -s)}BK_{1}v_{i}(s)ds.
\end{equation}
It is known that $ det(e^{A\tau}) = e^{tr(A\tau)} >0 $, which means $ e^{A\tau} $ is invertible, i.e., $ rank(e^{A\tau})=n $.
So the objective here changes to design the ESO $ \bar{Z}_{i}(t) $ such that $ \lim_{t \to \infty} v_{i}(t) =0, \lim_{t \to \infty} \tilde{Z}_{i}(t) =0 $, and then the consensus tracking error $ \lim_{t \to \infty} \tilde{x}_{i}(t) =0 $.

As we know, each follower has access to a weighted linear combination of relative states between itself and its neighbors. The network measurement for follower $ i $ is synthesized into a single signal as
\begin{equation}\label{network measurement}
\xi_{i}(t)= \sum_{j=1}^{N} a_{ij}(x_{i}(t)-x_{j}(t))+a_{i0}(x_{i}(t)-x_{0}(t)), i \in \mathbb{F}
\end{equation}
where $ a_{ij} $ is the $ (i,j) $-th entry of adjacency matrix $ \mathcal{A} $ of graph $ \mathcal{G} $. Especially, $ a_{i0}=1 $ means the follower $ i $ can get information from the leader and cannot otherwise. 
By using relative state information, denote a signal similar to \eqref{network measurement} as 
\begin{equation}\label{varrho}
\begin{aligned}
\varrho_{i}(t)=& a_{i0} [ \, v_{i}(t) 
- \int_{t-\tau}^{t} e^{A(t -s)}(Bu_{i}(s) + Ee^{S\tau}\hat{w}_{i}(s))ds \, ] +\sum_{j=1}^{N} a_{ij} \{ \, v_{i}(t)-v_{j}(t)  - \int_{t-\tau}^{t} e^{A(t -s)}  [B(u_{i}(s) -u_{j}(s))\\&+ Ee^{S\tau}(\hat{w}_{i}(s)-\hat{w}_{j}(s))]ds \,  \}-e^{A\tau}\xi_{i}(t).
\end{aligned}
\end{equation}
It is easy to calculate $ \varrho_{i}(t)= a_{i0}\tilde{v}_{i}(t) +\sum_{j=1}^{N} a_{ij}(\tilde{v}_{i}(t)-\tilde{v}_{j}(t))=\sum_{j=1}^{N} l_{ij}\tilde{v}_{j}(t) $.
\begin{remark}\label{remark_varrho}
	The signal $ \varrho_{i}(t) $, which will be used in the control protocol design, only needs 
	the relative state information $ \xi_{i}(t) $, the adaptive observer state $ v_{j}(t) $, the stored history of control input $ u_{j}(t-\tau) $ and disturbance observer state $ \hat{w}_{j}(t-\tau) $ of its neighbor $ j, j \in \mathbb{F} $ via the communication topology $ \mathcal{G} $.
	
\end{remark}
The fully distributed adaptive ESO is designed as

\begin{equation}\label{eq:protocol}
\dot{\bar{Z}}_{i}(t)=\underbrace{\begin{bmatrix}
	A+BK_{1} & 0 \\
	0 & S 
	\end{bmatrix}}_{\bar{A}_{1}}\bar{Z}_{i}(t)+\underbrace{\begin{bmatrix}
	K \\
	K^{'} 
	\end{bmatrix}}_{\bar{A}_{2}}(c_{i}(t)+\rho_{i}(t))\varrho_{i}(t)
\end{equation}
where $ K \in \mathbb{R}^{n \times n} $ and $ K^{'} \in \mathbb{R}^{s \times n} $ will be determined later. $ c_{i}(t) $ denotes the time-varying coupling weight associated with the $ i $-th follower and is used to make the whole controller fully distributed.
$ \rho_{i}(t) $ represents the smooth and nonnegative function. Both $ c_{i}(t) $ and $ \rho_{i}(t) $ are scalers and will be designed later.
From~\eqref{dot_Z_2} and \eqref{eq:protocol}, we have
\begin{equation*}
\begin{aligned}
\dot {\tilde{v}}_{i}(t) =& A\tilde{v}_{i}(t) + e^{A\tau}Ee^{S\tau}\tilde{w}_{i}(t-\tau) +K(c_{i}(t)+\rho_{i}(t))\varrho_{i}(t) ,\\
e^{S\tau}\dot{\tilde{w}}_{i}(t-\tau)=&Se^{S\tau}\tilde{w}_{i}(t-\tau)+e^{S\tau}K^{'}(c_{i}(t-\tau)+\rho_{i}(t-\tau))\varrho_{i}(t-\tau), \,i \in \mathbb{F}.
\end{aligned}
\end{equation*}
Denote $ e_{i}(t)=\begin{bmatrix}
\tilde{v}_{i}(t)\\ e^{S\tau}\tilde{w}_{i}(t-\tau)
\end{bmatrix}, \bar{K}=\begin{bmatrix}
K\\e^{S\tau}K^{'} 
\end{bmatrix} $. Note here that our objective is to prove $ \lim_{t \to \infty} \tilde{x}_{i}(t) =0 $, so it is equal to have $ c_{i}(t-\tau)=c_{i}(t), \rho_{i}(t-\tau)=\rho_{i}(t)$ and $ \varrho_{i}(t-\tau)=\varrho_{i}(t) $ when $ t \to \infty $, then
\begin{equation*}
\begin{aligned}
\dot e_{i}(t) = A_{T}e_{i}(t)+\bar{K}(c_{i}(t)+\rho_{i}(t))\sum_{j=1}^{N} l_{ij}Te_{i}(t)
\end{aligned}
\end{equation*}
where $ T=[I \, 0]\in  \mathbb{R}^{n \times (n+s)} $.
Similar to \eqref{network measurement} and \eqref{varrho},
denote a signal as
\begin{equation}\label{hat_e}
\hat{e}_{i}(t)=\sum_{j=1}^{N} l_{ij} e_{j}(t).
\end{equation}
The analysis of $ \hat{e}_{i}(t) $ is similar as Remark~\ref{remark_varrho}. Define $ \hat{e}(t)= [\hat{e}_{1}^{T}(t),\ldots,\hat{e}_{N}^{T}(t)]^{T} $, $ \hat{c}(t) = diag(c_{1}(t), \ldots, c_{N}(t)), \hat{\rho}(t) = diag(\rho_{1}(t),\ldots,\rho_{N}(t)) $ and $ e(t)= [e_{1}^{T}(t),\ldots,e_{N}^{T}(t)]^{T} $, then
\begin{equation}\label{nonautonomous_system_no}
\begin{aligned}
\dot{\hat{e}}(t)
=&(\mathcal{L}_{1}\otimes I_{n+s})\dot{e}(t)\\ 
=& [I_{N} \otimes A_{T} +\mathcal{L}_{1} (\hat{c}(t)+\hat{\rho}(t)) \otimes \bar{K}T]\hat{e}(t).\\
\end{aligned}
\end{equation}
The $ c_{i}(t) $ and $ \rho_{i}(t) $ are designed as follows
\begin{equation}\label{c_rho}
\begin{aligned}
\dot c_{i}(t) =& \hat{e}_{i}^{T}(t) \Gamma\hat{e}_{i}(t), \\ \rho_{i}(t)=&\hat{e}_{i}^{T}(t)P\hat{e}_{i}(t), \, i \in \mathbb{F}
\end{aligned}
\end{equation}
where $ c_{i}(0)\ge 0 $.
$ \Gamma \in  \mathbb{R}^{(n+s) \times (n+s)} $ and $ P \in  \mathbb{R}^{(n+s) \times (n+s)} $ are the feedback gain matrices to be determined in the following.

\begin{theorem}\label{theorem_disturbance}
	For the network-connected system with dynamics~\eqref{eq:dynamics} and \eqref{w_disturbance}, the fully distributed controller of~\eqref{input}, \eqref{eq:protocol} and \eqref{c_rho} solves the disturbance-rejecting consensus problem considering the control input time-delay under Assumptions \ref{assumption_tau}, \ref{assumptionABC} and \ref{assumptiondirected}
	if $ A+BK_{1} $ is Hurwitz, $ \Gamma = T^{T}T, \bar{K}=-P^{-1}T^{T} $ and $ P  > 0 $ is a solution to the LMI~\eqref{eq:ARE}.
	Moreover, the coupling weight $ c_{i}(t), i \in \mathbb{F} $ converge to some finite steady-state values.
\end{theorem}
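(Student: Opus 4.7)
The plan is to split the theorem into two tasks: first, prove that the augmented observer error $\hat e(t)$ tends to zero and every $c_i(t)$ stays bounded and converges; then, transfer this conclusion back to the tracking error $\tilde x_i(t)$ via the reduction identity \eqref{most_important_equation} together with the Hurwitz property of $A+BK_1$. Since $\mathcal L_1$ is a nonsingular $M$-matrix by Assumption~\ref{assumptiondirected}, Lemma~\ref{lemma_M_matrix} supplies a positive diagonal $G=\mathrm{diag}(g_1,\ldots,g_N)$ with $G\mathcal L_1+\mathcal L_1^TG>0$; this is the standard device for symmetrising a directed topology inside a quadratic Lyapunov function.

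For the first task I would take as Lyapunov candidate
\begin{equation*}
V(t)=\sum_{i=1}^N g_i\bigl(2c_i(t)+\rho_i(t)\bigr)\hat e_i^T(t)P\hat e_i(t)+\sum_{i=1}^N g_i\bigl(c_i(t)-\alpha\bigr)^2,
\end{equation*}
with a design constant $\alpha>0$ chosen large enough that $\alpha(G\mathcal L_1+\mathcal L_1^TG)$ dominates the residual cross-coupling that will appear below. Differentiating $V$ along \eqref{nonautonomous_system_no} and substituting the adaptive laws \eqref{c_rho}, the identities $\Gamma=T^TT$ and $P\bar K=-T^T$ (which follow from $\bar K=-P^{-1}T^T$) are exactly what is needed to cancel the indefinite terms generated by the coupling block $\mathcal L_1(\hat c+\hat\rho)\otimes\bar KT$. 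Together with the LMI \eqref{eq:ARE}, which renders $PA_T+A_T^TP-2T^TT$ negative definite, and Young-type splittings (Lemma~\ref{lemma_pq}) to absorb the remaining $\rho_i$-weighted terms, one arrives at $\dot V\le -\beta\|\hat e\|^2$ for some $\beta>0$.

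From $V\ge 0$ and $\dot V\le -\beta\|\hat e\|^2$, all signals in \eqref{nonautonomous_system_no} stay bounded; since $\dot c_i=\hat e_i^T\Gamma\hat e_i\ge 0$, each $c_i(t)$ is monotone and bounded and therefore converges to a finite steady-state value. Moreover $\int_0^\infty\|\hat e(t)\|^2\,dt<\infty$ and $\dot{\hat e}$ is bounded, so Barbalat's lemma yields $\hat e(t)\to 0$. Invertibility of $\mathcal L_1\otimes I_{n+s}$ then gives $e(t)\to 0$, i.e.\ $\tilde v_i(t)\to 0$ and $\tilde w_i(t-\tau)\to 0$. Feeding these vanishing inputs into the cascade \eqref{dot_Z_2} driven by the Hurwitz matrix $A+BK_1$ forces $\tilde Z_i(t)\to 0$; consequently $v_i=\tilde Z_i+\tilde v_i\to 0$, so in \eqref{most_important_equation} the length-$\tau$ integral tends to zero, and invertibility of $e^{A\tau}$ concludes $\tilde x_i(t)\to 0$.

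The main obstacle I expect is two-fold. First, the $\rho_i$-weighted cross-terms in $\dot V$ are cubic--quartic in $\hat e$, so the choice of $\alpha$ and the Young splitting must be tuned so that the negative-definite contribution of the LMI still dominates uniformly in $\hat c,\hat\rho$. Second, the closed-loop observer system \eqref{nonautonomous_system_no} as written quietly identifies $c_i(t-\tau),\rho_i(t-\tau),\varrho_i(t-\tau)$ with their current-time counterparts, which is only asymptotically valid; the clean way to close the argument without circularity is to carry out the Lyapunov analysis on the resulting surrogate and then verify, a posteriori, that the delayed--current differences are dominated by the quadratic decay already established.
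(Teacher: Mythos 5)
Your proposal follows essentially the same route as the paper: the Lyapunov function you write down is (up to a factor of two and the substitution $\rho_i=\hat e_i^TP\hat e_i$ from \eqref{c_rho}) exactly the paper's $V_1$ in \eqref{v1}, and the subsequent chain --- Lemma~\ref{lemma_M_matrix} for the diagonal $G$, the Young-type splitting of Lemma~\ref{lemma_pq}, the LMI \eqref{eq:ARE}, monotone boundedness and convergence of $c_i$, vanishing of $\hat e$ and hence of $e$, the Hurwitz cascade \eqref{Z_tilde}, and the reduction identity \eqref{most_important_equation} --- matches the paper's proof step by step. The only cosmetic difference is that you invoke Barbalat's lemma where the paper uses LaSalle's invariance principle, and both arguments handle the delayed-versus-current identification of $c_i,\rho_i,\varrho_i$ in the same a-posteriori fashion.
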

\begin{proof}
	In the proof, we omit symbol $ (t) $ for convenience in writing if there is no special statements. 
	
	Let
	\begin{equation}\label{v1}
	V_{1}=\frac{1}{2}\sum_{i=1}^{N} g_{i}(2c_{i}+\rho_{i})\rho_{i}+\frac{1}{2}\sum_{i=1}^{N} g_{i}(c_{i}- \beta)^{2}
	\end{equation}
	where $ G = diag(g_{1}, \ldots, g_{N}) > 0 $ is a positive definite matrix such that $ G\mathcal{L}_{1}+\mathcal{L}_{1}^{T}G > 0 $. Since $ \mathcal{L}_{1} $ is a nonsingular $ M $-matrix, thus $ G $ exists based on Lemma \ref{lemma_M_matrix}. Particularly, $ g_{i}, i \in \textbf{I}[1,N] $ can be constructed as $ [g_{1}, \ldots, g_{N}]^{T}=(\mathcal{L}_{1}^{T})^{-1}[1,\ldots,1]^{T} $~\cite{wangchunyanCyber}.
	It is easy to get $ c_{i}(t) \ge 0, \forall t \ge 0 $ based on $\dot{c}_{i}(t) \ge 0, c_{i}(0)\ge 0 $ in~\eqref{c_rho}.
	$ \beta $ is a positive constant to be determined. Noting further that $ \rho_{i} \geq 0 $, so $ V_{1} $ is positive definite. Then
	\begin{equation}\label{eq:derivative_v1}
	\begin{aligned}
	\dot{V}_{1} 
	=& \displaystyle \sum_{i=1}^{N} [g_{i}(c_{i}+\rho_{i}) \dot{\rho}_{i} +g_{i}\rho_{i}\dot{c}_{i}+g_{i}(c_{i}-\beta)\dot{c}_{i}]\\
	=&\hat{e}^{T} [G(\hat{c}+\hat{\rho}) \otimes(PA_{T}+A_{T}^{T}P) +G(\hat{c}+\hat{\rho} - \beta I) \otimes \Gamma +(\hat{c}+\hat{\rho})(G\mathcal{L}_{1} + \mathcal{L}_{1}^{T}G) (\hat{c}+\hat{\rho}) \otimes P\bar{K}T]\hat{e}\\
	\le & \hat{e}^{T} [G(\hat{c}+\hat{\rho}) \otimes(PA_{T}+A_{T}^{T}P) + G(\hat{c}+\hat{\rho} - \beta I) \otimes T^{T}T -\lambda_{0}(\hat{c}+\hat{\rho})^{2} \otimes T^{T}T ] \hat{e}\\
	\end{aligned}
	\end{equation}
	where $ \lambda_{0}>0 $ is the smallest eigenvalue of $ G\mathcal{L}_{1} + \mathcal{L}_{1}^{T}G $. The inequality comes from $ G\mathcal{L}_{1} + \mathcal{L}_{1}^{T}G \ge \lambda_{0} I $ (Lemma \ref{lemma_M_matrix}). By using Lemma \ref{lemma_pq} we get
	
	\begin{equation}\label{inequality}
	\hat{e}^{T}[G(\hat{c}+\hat{\rho}) \otimes T^{T}T] \hat{e} \le \hat{e}^{T}[(\frac{\lambda_{0}}{2} (\hat{c}+\hat{\rho})^{2} + \frac{G^{2}}{2\lambda_{0}}) \otimes T^{T}T] \hat{e}.
	\end{equation}
	Substituting \eqref{inequality} into \eqref{eq:derivative_v1} yields
	\begin{equation}\label{v1_1}
	\begin{aligned}
	\dot{V}_{1} 
	\le & \hat{e}^{T} \{G(\hat{c}+\hat{\rho}) \otimes(PA_{T}+A_{T}^{T}P) -[\frac{\lambda_{0}}{2}(\hat{c}+\hat{\rho})^{2} -\frac{G^{2}}{2\lambda_{0}} +\beta G] \otimes T^{T}T \} \hat{e} \\
	\le & \hat{e}^{T} [G(\hat{c}+\hat{\rho}) \otimes(PA_{T}+A_{T}^{T}P-2T^{T}T) ] \hat{e}\\
	\le & 0.
	\end{aligned}
	\end{equation}
	Given
	the fact that $ a +b \geq 2 \sqrt{ab}, \forall a, b \in \mathbb{R}^{+} $, we have chosen $ \beta \geq \frac{5}{2\lambda_{0}}\max_{i \in \mathbb{F}} g_{i} $ to get the second inequality. The last
	inequality comes from LMI \eqref{eq:ARE}.
	
	So we can conclude that $ V_{1}(t) $ is bounded and so are $ \hat{e}_{i}$ and $ c_{i} $. It follows from \eqref{c_rho} and $ \Gamma = T^{T}T $ that $ \dot{c_{i}}(t) \ge 0 $, thus the coupling weights $  c_{i}(t), i \in \mathbb{F} $ increase monotonically and converge to some finite values finally, which verifies $ \lim_{t \to \infty} c_{i}(t-\tau)=\lim_{t \to \infty} c_{i}(t)  $. Note that $ \dot{V}_{1}(t) \equiv 0 $ is equivalent to $ \hat{e} =0 $. By LaSalle's Invariance principle \cite{krstic_nonlinear_1995}, it follows that $ \hat{e} $ asymptotically converges to zero, i.e., $ \lim_{t \to \infty}\hat{e} = 0 $. So from \eqref{c_rho}, $ \lim_{t \to \infty}\rho = 0 $  which verifies $ \lim_{t \to \infty}\rho(t-\tau) = \lim_{t \to \infty}\rho(t) $.
	
	Recalling that $ \hat{e} = (\mathcal{L}_{1} \otimes I_{n+s}) e $ in~\eqref{nonautonomous_system_no} and $ \mathcal{L}_{1} $ is nonsingular, we prove $ \lim_{t \to \infty} e=0 $. Considering $ e^{S\tau} $ is invertible, i.e., $ rank(e^{S\tau})=s $, we have $ \lim_{t \to \infty} \tilde{v}_{i}(t) =0, \lim_{t \to \infty} \tilde{w}_{i}(t)=0 $. Since $ \varrho=(\mathcal{L}_{1}\otimes I_{n})\tilde{v} $ from~\eqref{varrho}, it is easy to verify $ \lim_{t \to \infty}\varrho(t-\tau) = \lim_{t \to \infty}\varrho(t) $.
	
	Recall~\eqref{dot_Z_2} as
	\begin{equation}\label{Z_tilde}
	\dot{\tilde{Z}}_{i}=(A+BK_{1})\tilde{Z}_{i}+\tilde{K} e_{i} 
	\end{equation}
	where $ \tilde{K}=[BK_{1}, \, -e^{A\tau}E] $ and $ e_{i}=[\tilde{v}_{i}^{T}, \, (e^{S\tau}\tilde{w}_{i}(t-\tau))^{T}]^{T} $. 
	Since $ A+BK_{1} $ is Hurwitz and $ \lim_{t \to \infty} e_{i}=0 $, from \eqref{Z_tilde} we have $ \lim_{t \to \infty} \tilde{Z}_{i}=0, i \in \mathbb{F} $.
	
	Thanks to $ \lim_{t \to \infty} \tilde{Z}=0 $ and $ \lim_{t \to \infty} \tilde{v}_{i}=0 $, we have $\lim_{t \to \infty} v_{i}=0 $. As it is known that $ e^{A\tau} >0 $, 
	from~\eqref{most_important_equation}, we prove that the consensus tracking error $ \lim_{t \to \infty} \tilde{x}(t)=0 $, i.e., the proof is finished.
\end{proof}

\begin{remark}
	It is worth noting that for each follower $ i $, the variable $ \varrho_{i}(t) $ is very important for the fully distributed adaptive ESO design in~\eqref{eq:protocol}. The detailed explanation can be referred to Remark~\ref{remark_varrho}.
\end{remark}

\begin{remark}
	In contrast to the result \cite{wang_consensus_2017-1} where the consensus disturbance rejection problem of network-connected dynamic systems with input delay under undirected communication topology is solved,
	the distinctive feature of our whole control are twofolds: i) our controller is fully distributed; ii) the \ct{} is directed, which could save tremendous communication resources compared with the undirected topology.
\end{remark}

For the case there is no time-delay in the control input, we simply change $ \tilde{v}_{i}, e_{i} $ as $ \tilde{v}_{i}=v_{i}-\tilde{x}_{i}, e_{i}=[\tilde{v}_{i}^{T}, \tilde{w}_{i}^{T}]^{T} $, and modify the control input from~\eqref{input} to the following
\begin{equation}\label{input_nodelay}
u_{i}(t)=\begin{bmatrix}
K_{1} & -F 
\end{bmatrix}
\bar{Z}_{i}(t), i \in \mathbb{F}.
\end{equation} 
Then the consensus disturbance rejection problem under Assumptions \ref{assumptionABC} and \ref{assumptiondirected} is solved with the controller of~\eqref{input_nodelay}, \eqref{eq:protocol} and \eqref{c_rho}. Specifically, \eqref{nonautonomous_system_no} changes to
\begin{equation*}
\begin{aligned}
\dot{\hat{e}}(t)=& [I_{N} \otimes A_{T}^{'} +\mathcal{L}_{1} (\hat{c}(t)+\hat{\rho}(t)) \otimes \bar{K}^{'} T]\hat{e}(t)
\end{aligned}
\end{equation*}
where $ A_{T}^{'}=\begin{bmatrix}
A & E \\
0 & S 
\end{bmatrix}, \bar{K}^{'}=\begin{bmatrix}
K\\K^{'} 
\end{bmatrix} $ and $ T=\begin{bmatrix}
I & 0  
\end{bmatrix} $. From Lemma 1 of~\cite{ding2015consensus} it is known that 
$ (A_{T}^{'},T) $ is observable. The other parameters can be calculated similarly as the proof of Theorem~\ref{theorem_disturbance} and the detail is omitted here.

\subsection{Consensus tracking control with $ u_{0}(t) $}\label{subsec_consensus_with_leader_input}

In this subsection,the consensus tracking problem with leader's control input satisfying Assumption~\ref{leader input} is investigated. Correspondingly, \eqref{x_tilde} and \eqref{z_transformed_variable} change to
\begin{equation}\label{z_transformed_u0}
\begin{aligned}
\dot{\tilde{x}}_{i}(t)  =& A\tilde{x}_{i}(t)+B(u_{i}(t-\tau)-u_{0}(t-\tau))+E\bar{w}_{i}(t),\\
\tilde{Z}_{i}(t)=& e^{A\tau}\tilde{x}_{i}(t) + \int_{t-\tau}^{t} e^{A(t -s)}[B (u_{i}(s)-u_{0}(s)) + Ee^{S\tau}\hat{w}_{i}(s)]ds.
\end{aligned}
\end{equation}

Considering the leader's bounded input $ u_{0}(t) $, the following continuous nonlinear function $ z(\cdot) $ 
\begin{equation}\label{z_discontinuous}
z_{i}(x) =
\begin{cases}
\frac{x}{\|x\|}  & \text{if $ \|x\| > \sigma_{i}$,} \\
\frac{x}{\sigma_{i}} & \text{if $ \|x\| \le \sigma_{i}$}
\end{cases}
\end{equation}
is used to compensate the leader's input effect to the whole cooperative system. 
So based on~\eqref{input}, the modified control input is designed as
\begin{equation}\label{input_u0}
u_{i}(t)=(\bar{K}_{1}-\bar{F})\bar{Z}_{i}(t)  - \alpha z(\zeta_{i}(t)), i \in \mathbb{F}
\end{equation} 
such that $ \tilde{Z}_{i}(t) $ in~\eqref{z_transformed_u0} changes to
\begin{equation}\label{dot_Z_2_u0}
\begin{aligned}
\dot{\tilde{Z}}_{i}(t)=&(A+BK_{1})\tilde{Z}_{i}(t)+BK_{1}\tilde{v}_{i}(t)-e^{A\tau}Ee^{S\tau} \tilde{w}_{i}(t-\tau)
-B(\alpha z(\zeta_{i}(t))+u_{0}(t))
\end{aligned}
\end{equation}
where $ \alpha, \zeta_{i}(t) $ will be designed later. The ESO $ \bar{Z}_{i}(t)=[v_{i}(t)^{T}, \hat{w}_{i}(t)^{T}]^{T} $ is modified as

\begin{equation}\label{eq:protocol_u0}
\begin{aligned}
\dot{\bar{Z}}_{i}(t) =& \bar{A}_{1}\bar{Z}_{i}(t)+\bar{A}_{2}(c_{i}(t)+\rho_{i}(t))\varrho_{i}(t)-\bar{B} \alpha[z(\zeta_{i}(t))+z(\tilde{\zeta}_{i}(t))] ,\\
\dot c_{i}(t) =& \hat{e}_{i}^{T}(t) \Gamma\hat{e}_{i}(t) -\epsilon_{i}(c_{i}(t)-\beta_{1}), \\ \rho_{i}(t)=&\hat{e}_{i}^{T}(t)P\hat{e}_{i}(t), \, i \in \mathbb{F}
\end{aligned}
\end{equation}
where $ c_{i}(0)\ge \beta_{1} $ and $ \beta_{1}, \epsilon_{i} $ are positive constants. $ \tilde{\zeta}_{i}(t) $ will be designed later. Other variable formats are the same as in~\ref{subsec_consensus_without_leader_input} and \eqref{nonautonomous_system_no} changes to the following nonautonomous system $ \dot{\hat{e}}(t) = f( \hat{e}(t), t) $ as 
\begin{equation}\label{nonautonomous_system}
\begin{aligned}
\dot{\hat{e}}(t)= [I_{N} \otimes A_{T} +\mathcal{L}_{1} (\hat{c}(t)+\hat{\rho}(t)) \otimes \bar{K}T]\hat{e}(t)  - (\mathcal{L}_{1}\otimes \bar{B}) [\alpha z(\tilde{\zeta}(t)) -\textbf{1}\otimes u_{0}(t)].\\
\end{aligned}
\end{equation}
\begin{remark}
	From $ \tilde{Z}_{i}(t) $ in~\eqref{z_transformed_u0} and $ \varrho_{i}(t)= a_{i0}\tilde{v}_{i}(t) +\sum_{j=1}^{N} a_{ij}(\tilde{v}_{i}(t)-\tilde{v}_{j}(t)) $ with $ \tilde{v}_{i}(t)=v_{i}(t)-\tilde{Z}_{i}(t) $, we can see only a subset of followers need the historical information of leader's control input, i.e., $ u_{0}(t-\tau) $.
\end{remark}
\begin{theorem}\label{theorem_disturbance_u0}
	For the network-connected system with dynamics~\eqref{eq:dynamics} and \eqref{w_disturbance}, the fully distributed controller of~\eqref{input_u0} and \eqref{eq:protocol_u0} solves the consensus disturbance attenuation problem considering the input delay with the leader of bounded input under Assumptions \ref{assumption_tau}-\ref{assumptiondirected}
	if $ A+BK_{1} $ is Hurwitz, $ \Gamma = T^{T}T, \bar{K}=-P^{-1}T^{T} $, $ \alpha \ge \epsilon, \zeta _{i} (t) =B^{T}Q \tilde{Z}_{i}(t),  \tilde{\zeta}_{i}(t)=\bar{B}^{T}P\hat{e}_{i}(t) $
	and $ P \ge 0, Q \ge 0 $ are solutions to the following LMIs
	\begin{align}
	PA_{T}+A_{T}^{T}P+\mu P-2T^{T}T  <0, \label{lmi}\\
	Q(A+BK_{1}) +(A+BK_{1})^{T}Q<0 \label{lmi_Q}
	\end{align}
	where $ \mu >1 $. The consensus tracking error $ \tilde{x}_{i}(t) $ converges exponentially to the residual set
	\begin{equation}\label{tilde_x_intervel}
	\Pi= \left\lbrace    \tilde{x}_{i}(t): \| \tilde{x}_{i}(t) \| \le \|\tilde{Z}_{i}(t-\tau)\| + \chi \|E\| \, \|e^{S\tau}\tilde{w}_{i}(t-\tau) \|  \right\rbrace 
	\end{equation}
	where $ \chi = \| \int_{-\tau}^{0} e^{As}ds \| $. 
	$ \tilde{Z}_{i}(t), \tilde{w}_{i}(t) $ satisfy~\eqref{Z_mode} and \eqref{e_interval} in the proof, respectively.
	Besides, $ c_{i}(t), i \in \mathbb{F} $ are uniformly ultimately bounded.
\end{theorem}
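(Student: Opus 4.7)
The strategy is to extend the Lyapunov construction of Theorem~\ref{theorem_disturbance} to the case with a bounded leader input $u_0(t)$. The compensating terms $\alpha z(\zeta_i)$ in \eqref{input_u0} and $\alpha z(\tilde\zeta_i)$ in \eqref{eq:protocol_u0} are designed as continuous sliding-mode approximations that, with $\alpha\ge\epsilon$, dominate the perturbation $Bu_0$; meanwhile, the leakage $-\epsilon_i(c_i-\beta_1)$ in $\dot c_i$ keeps the adaptive gains from drifting. The argument splits naturally into UUB of the observer error and the adaptive gains, exponential convergence of the predictor $\tilde Z_i$ into a residual ball, and a final translation of that ball back to a ball for $\tilde x_i$ through the reduction identity.

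For the first part I reuse $V_1$ from \eqref{v1}. Differentiating along \eqref{nonautonomous_system}, the derivation of \eqref{v1_1} goes through almost verbatim for the quadratic part, except that the strict LMI \eqref{lmi} now supplies an additional margin $-\mu P$ that I will use to absorb higher-order positive contributions. Two new terms appear. The first is the leakage piece $-\sum g_i\epsilon_i(\rho_i+c_i-\beta)(c_i-\beta_1)$, which provides damping in the $(c_i-\beta_1)$ direction for $\beta$ chosen sufficiently large. The second is the cross term induced by $-(\mathcal L_1\otimes\bar B)[\alpha z(\tilde\zeta)-\mathbf 1\otimes u_0]$; using $\tilde\zeta_i=\bar B^{T}P\hat e_i$, the piecewise definition \eqref{z_discontinuous}, $\alpha\ge\epsilon$ and $\|u_0\|\le\epsilon$, the diagonal contribution $\tilde\zeta_i^{T}[\alpha z_i(\tilde\zeta_i)-u_0]$ is bounded below by $-\sigma_i\epsilon$, so this part yields only a bounded constant. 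The off-diagonal contribution is handled by Young's inequality (Lemma~\ref{lemma_pq}) against the quadratic slack provided by \eqref{lmi}. The net estimate has the form $\dot V_1\le-\mathcal K_3(\|\hat e\|,\|c-\beta_1\mathbf 1\|)+\Xi$ with $\Xi$ depending only on $\sigma_i$ and $\epsilon$; Lemma~\ref{lemma_kappa_function} then delivers UUB of $\hat e(t)$ and $c_i(t)$, and invertibility of $\mathcal L_1$ and $e^{S\tau}$ gives UUB of $\tilde v_i$ and $\tilde w_i$.

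For the second part, take $V_2=\sum_i\tilde Z_i^{T}Q\tilde Z_i$. Along \eqref{dot_Z_2_u0}, \eqref{lmi_Q} provides the principal negative definite term. The cross term $-2\tilde Z_i^{T}QB[\alpha z(\zeta_i)+u_0]$ is bounded above by a multiple of $\sigma_i\epsilon$ by the same sliding-mode computation applied to $\zeta_i=B^{T}Q\tilde Z_i$. The residual forcings $BK_1\tilde v_i$ and $-e^{A\tau}Ee^{S\tau}\tilde w_i(t-\tau)$ are absorbed by Young's inequality, contributing bounded additive constants thanks to the first part. This produces $\dot V_2\le-aV_2+b$, and Lemma~\ref{lemma_dotV_V} delivers exponential convergence of $\tilde Z_i$ into a ball. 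Finally, evaluating \eqref{z_transformed_u0} at $t-\tau$ and comparing with the integrated dynamics of $\tilde x_i$ over $[t-\tau,t]$, while using $\bar w_i(s)=e^{S\tau}\bar w_i(s-\tau)$ and $\tilde w_i=\hat w_i-\bar w_i$, all control-input integrals telescope and one obtains
\begin{equation*}
\tilde x_i(t)=\tilde Z_i(t-\tau)-\int_{t-\tau}^{t}e^{A(t-s)}E e^{S\tau}\tilde w_i(s-\tau)\,ds,
\end{equation*}
from which \eqref{tilde_x_intervel} follows with $\chi=\|\int_{-\tau}^{0}e^{As}\,ds\|$.

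The main obstacle is in the first part: handling the non-diagonal coupling $G(\hat c+\hat\rho)\mathcal L_1\otimes P\bar B$ in the cross term induced by $u_0$. Because $\tilde\zeta_i=\bar B^{T}P\hat e_i$ carries the index $i$ while the cross term mixes indices through $l_{ij}$, the sliding-mode lower bound applies cleanly only to the diagonal part; controlling the off-diagonal part requires splitting it against the symmetric positive definite piece $(G\mathcal L_1+\mathcal L_1^{T}G)/2\ge\lambda_0 I$ already used in Theorem~\ref{theorem_disturbance} or absorbing it into the slack $-\mu G(\hat c+\hat\rho)\otimes P$ from \eqref{lmi}, so that $\Xi$ remains a $u_0$-uniform constant. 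Verifying that the adaptive-gain analysis and the sliding-mode compensation can accommodate this coupling simultaneously is the technical heart of the proof.
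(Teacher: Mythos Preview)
Your overall architecture---reuse $V_1$ for the observer/adaptive-gain layer, a quadratic in $\tilde Z$ for the predictor layer, then invert the reduction identity to reach $\tilde x_i$---matches the paper. Two differences are worth flagging.

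First, for the predictor layer you take $V_2=\sum_i\tilde Z_i^{T}Q\tilde Z_i$ and treat $\tilde v_i,\tilde w_i$ as bounded exogenous forcings once UUB has been established. The paper instead uses the \emph{combined} candidate $V_2=\sum_i\tilde Z_i^{T}Q\tilde Z_i+\gamma_1 V_1$ with $\gamma_1$ chosen large enough that the cross term $2\tilde Z^{T}(\mathcal L_1^{-1}\otimes Q\tilde K)\hat e$ is absorbed directly by the $\hat e$-negative piece coming from $\dot V_1$. Your cascade approach is valid and arguably simpler, but be aware that the explicit residual bounds \eqref{Z_mode} and \eqref{e_interval} quoted in the theorem statement are derived from the combined analysis; the cascade route yields different (typically looser) constants.

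Second, and more important, the ``main obstacle'' you isolate is real, and your proposed resolution does not close it. If you split $\Omega$ into diagonal and off-diagonal parts and apply Young's inequality to the latter, the remainder still carries the factor $g_i(c_i+\rho_i)$; since $c_i$ and $\rho_i$ are not yet known to be bounded, you cannot call that remainder a constant $\Xi$, nor can the slack $-\mu G(\hat c+\hat\rho)\otimes P$ absorb a term that is linear in $(c_i+\rho_i)$ but has no $\hat e$-factor. The paper avoids this entirely by a three-case analysis on the index set $\{\|\tilde\zeta_i\|>\sigma_i\}$. In the ``all large'' case, write $\sum_j l_{ij}z(\tilde\zeta_j)=a_{i0}z(\tilde\zeta_i)+\sum_{j\ne i}a_{ij}[z(\tilde\zeta_i)-z(\tilde\zeta_j)]$; since then $\tilde\zeta_i^{T}z(\tilde\zeta_i)=\|\tilde\zeta_i\|$ and $\tilde\zeta_i^{T}z(\tilde\zeta_j)\le\|\tilde\zeta_i\|$, every off-diagonal bracket is nonnegative, so the $a_{ij}$-coupled part of $\Omega$ is $\le 0$ outright and only the $a_{i0}$ part survives, where $\alpha\ge\epsilon$ finishes. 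In the ``all small'' case, bound $\|\tilde\zeta_i\|\le\sigma_i$ crudely and then split $(c_i+\rho_i)=(c_i-\beta_1)+\rho_i+\beta_1$: the first piece is absorbed by the leakage damping $-\epsilon_i(c_i-\beta_1)^2$, the second by noting $\rho_i^2\le(\lambda_{\max}(P)/\lambda_{\min}(\mathcal H))\,g_i\rho_i\hat e_i^{T}\mathcal H\hat e_i$ and sacrificing half of the main negative term, and the third is a genuine constant. The mixed case reduces to the small case. This Laplacian-structure argument is the step your sketch is missing; with it in place, the rest of your outline goes through.
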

\begin{proof}
	In the proof, we omit symbol $ (t) $ for convenience in writing if there is no special statements.
	The Lyapunov function candidate is the same as~\eqref{v1}, and after the same calculation as in the proof of Theorem~\ref{theorem_disturbance}, \eqref{v1_1} changes to
	\begin{equation}\label{v1_1_u0}
	\dot{V}_{1} 
	\le  -\hat{e}^{T} [G(\hat{c}+\hat{\rho}) \otimes \mathcal{H}] \hat{e} -  \sum_{i=1}^{N} g_{i}(c_{i}-\beta)\epsilon_{i}(c_{i}-\beta_{1}) +\Omega
	\end{equation}
	where $ \mathcal{H}=-(PA_{T}+A_{T}^{T}P-2T^{T}T) >0 $ and $ \Omega=-2  \hat{e}^{T}[G(\hat{c}+\hat{\rho})\mathcal{L}_{1} \otimes P\bar{B}][\alpha z(\tilde{\zeta}(t)) -\textbf{1}\otimes u_{0}(t)] $.
	
	Firstly, we come to deal with the leader's bounded input $ u_{0}(t) $ and the nonlinear function $ z(\cdot) $ in $ \Omega $. Using the Laplacian matrix property $ \mathcal{L}_{1}\textbf{1}=-\mathcal{L}_{2} $ and Assumption~\ref{leader input}, we get
	\begin{equation}\label{z2}
	\begin{aligned}
	\hat{e}^{T}[G(\hat{c}+\hat{\rho})\mathcal{L}_{1}\otimes P\bar{B}] (\textbf{1} \otimes u_{0}(t)) &= \sum_{i=1}^{N} [ g_{i}(c_{i}+\rho_{i}) \hat{e}_{i}^{T} P\bar{B}a_{i0} u_{0}(t) ]
	\\ & \le  \sum_{i=1}^{N}g_{i}(c_{i}+\rho_{i}) \|\bar{B}^{T}P\hat{e}_{i}\| a_{i0} \epsilon.
	\end{aligned}
	\end{equation}
	On the other hand, considering the following three cases.
	
	i) $ \|\bar{B}^{T}P\hat{e}_{i}\| > \sigma_{i}, i\in \mathbb{F} $, then
	\begin{equation*}
	\begin{aligned}
	\hat{e}_{i}^{T}P\bar{B}z(\bar{B}^{T}P\hat{e}_{i})=&\hat{e}_{i}^{T}P\bar{B} \frac{\bar{B}^{T}P\hat{e}_{i}}{\|\bar{B}^{T}P\hat{e}_{i}\|}=\|\bar{B}^{T}P\hat{e}_{i}\|, \\
	\hat{e}_{i}^{T}P\bar{B}z(\bar{B}^{T}P\hat{e}_{j})\le&\|\hat{e}_{i}^{T}P\bar{B}\| \left\| \frac{\bar{B}^{T}P\hat{e}_{j}}{\|\bar{B}^{T}P\hat{e}_{j}\|} \right\| =\|\bar{B}^{T}P\hat{e}_{i}\|.
	\end{aligned}
	\end{equation*}
	Here is the reason we choose $ \tilde{\zeta}_{i}(t)=\bar{B}^{T}P\hat{e}_{i} $, then
	\begin{equation}\label{z1}
	\begin{aligned}
	-\hat{e}^{T}[G(\hat{c}+\hat{\rho})\mathcal{L}_{1}\otimes P\bar{B} ]\alpha z(\tilde{\zeta}(t))
	=& -\sum_{i=1}^{N} \{  g_{i}(c_{i}+\rho_{i}) \alpha \hat{e}_{i}^{T} P\bar{B} [a_{i0}z(\bar{B}^{T}P\hat{e}_{i})+ \sum_{j=1}^{N}a_{ij}(z(\bar{B}^{T}P\hat{e}_{i})-z(\bar{B}^{T}P\hat{e}_{j}) )] \} \\
	\le & -\sum_{i=1}^{N}g_{i}(c_{i}+\rho_{i})  \|\bar{B}^{T}P\hat{e}_{i}\| a_{i0} \alpha.
	\end{aligned}
	\end{equation}
	Combining \eqref{z1} and \eqref{z2} with $ \alpha \ge \epsilon $, we have
	\begin{equation}\label{inequality_uo_z}
	\Omega \le 0.
	\end{equation}
	
	ii) $ \|\bar{B}^{T}P\hat{e}_{i}\| \le \sigma_{i}, i\in \mathbb{F} $, then
	\begin{equation*}
	\|z(\bar{B}^{T}P\hat{e}_{j})\|= \|\frac{\bar{B}^{T}P\hat{e}_{j}}{\sigma_{j}} \|\le 1.
	\end{equation*}
	Due to $ a_{ij} = 0 / 1 $ in $ \mathcal{A} $ of the graph $ \mathcal{G} $, we get
	\begin{equation}\label{inequality_uo_z1}
	\begin{aligned}
	\Omega & \le \sum_{i=1}^{N}g_{i}[2(c_{i}-\beta_{1})+2\rho_{i}+2\beta_{1}]   [a_{i0} \epsilon+(2N-1)\alpha]\sigma_{i}\\& \le \sum_{i=1}^{N} \frac{g_{i}\epsilon_{i}}{4}(c_{i}-\beta_{1})^{2} + \sum_{i=1}^{N}\frac{\lambda_{min}(\mathcal{H})}{2\lambda_{max}(P)}g_{i}\rho_{i}^{2}+ \Xi_{1}
	\end{aligned}
	\end{equation}
	where 
	\begin{equation}\label{xi_1}
	\begin{aligned}
	\Xi_{1}=& \frac{(\beta-\beta_{1})^{2}}{2} \sum_{i=1}^{N} g_{i}\epsilon_{i}       + \sum_{i=1}^{N} g_{i}\sigma_{i}[a_{i0} \epsilon+(2N-1)\alpha] \{ 2\beta_{1} +(\frac{4}{\epsilon_{i}} + \frac{2\lambda_{max}(P)}{\lambda_{min}(\mathcal{H})})\sigma_{i}[a_{i0} \epsilon+(2N-1)\alpha] \}.
	\end{aligned}
	\end{equation}

	iii) $ \hat{e}_{i}, i\in \mathbb{F} $ satisfy neither case i) nor case ii). Generally, assume $ \|\bar{B}^{T}P\hat{e}_{i}\| > \sigma_{i}, i=1,\ldots,k $, and $ \|\bar{B}^{T}P\hat{e}_{i}\| \le \sigma_{i}, i=k+1,\ldots,N $, then 
	\begin{equation}\label{inequality_uo_z2}
	\Omega\le 2\sum_{i=k+1}^{N}g_{i}(c_{i}+\rho_{i})  [a_{i0} \epsilon+(2N-1)\alpha]\sigma_{i}.
	\end{equation}
	Comparing \eqref{inequality_uo_z}, \eqref{inequality_uo_z1} and \eqref{inequality_uo_z2}, we find out that $ \Omega $ satisfies~\eqref{inequality_uo_z1}.
	Note that
	\begin{equation*}
	\begin{aligned}
	-(c_{i}-\beta)(c_{i}-\beta_{1}) =& -(c_{i}-\beta)^{2}-(c_{i}-\beta)(\beta -\beta_{1})\\
	\le&-\frac{1}{2}(c_{i}-\beta)^{2}+\frac{1}{2}(\beta-\beta_{1})^{2}
	\end{aligned}
	\end{equation*}
	and
	\begin{equation*}
	\begin{aligned}
	-(c_{i}-\beta)(c_{i}-\beta_{1}) =& -(c_{i}-\beta_{1})^{2}-(\beta_{1}-\beta)(c_{i} -\beta_{1})\\
	\le&-\frac{1}{2}(c_{i}-\beta_{1})^{2}+\frac{1}{2}(\beta-\beta_{1})^{2}.
	\end{aligned}
	\end{equation*}
	Then substituting above two inequalities and \eqref{inequality_uo_z1} into~\eqref{v1_1_u0}, we obtain
	\begin{equation}\label{V_1}
	\dot{V}_{1} 
	\le  -\frac{1}{2}\hat{e}^{T} [G(\hat{c}+\hat{\rho}) \otimes \mathcal{H} ] \hat{e}-\sum_{i=1}^{N} \frac{g_{i}\epsilon_{i}}{4} (c_{i}-\beta)^{2}+\Xi_{1}.
	\end{equation}
	Thanks to $ \mu >1, P\ge 0 $ and the LMI~\eqref{lmi}, we have $ \mathcal{H}>(\mu -1)P\ge 0 $. What is more, $ \sum_{i=1}^{N} \frac{g_{i}\epsilon_{i}}{4}(c_{i}-\beta)^{2}  \ge 0 $, then	
	\begin{align}
	\dot{V}_{1} 
	\le&  -\frac{1}{2}\hat{e}^{T} [G(\hat{c}+\hat{\rho}) \otimes \mathcal{H} ] \hat{e}+\Xi_{1}\label{deri_v1_1}.
	\end{align}

	Define the continuous function $ \mathcal{K}_{3} (\| \hat{e} \|)=\hat{e}^{T} [G(\hat{c}+\hat{\rho}) \otimes \mathcal{H} ] \hat{e} $. Because of $ G(\hat{c}+\hat{\rho}) >0 $ and $ \mathcal{H}>0 $, it is easy to verify $ \mathcal{K}_{3} $ belongs to class $ \mathcal{K} $ function. Considering  $ \Xi_{1}>0 $, from~\eqref{deri_v1_1} and Lemma~\ref{lemma_kappa_function}, it is easy to conclude that $ \hat{e} (t) $, which is the solution of the nonautonomous system $ \dot {\hat{e}} = f( \hat{e}, t) $ in~\eqref{nonautonomous_system}, is uniformly ultimately bounded.

	Secondly, considering $ \rho_{i}\ge 0 $, from~\eqref{v1} we get
	\begin{equation}\label{kappaV1}
	\kappa_{1} V_{1}
	\le  \kappa_{1} \sum_{i=1}^{N} g_{i}(c_{i}+\rho_{i}) \hat{e}_{i}^{T}P\hat{e}_{i} + \sum_{i=1}^{N} \frac{\kappa_{1} g_{i}}{2}(c_{i}-\beta)^{2}
	\end{equation}
	where $ \kappa_{1} >0 $ is a small positive constant to be designed later. Combine~\eqref{V_1} and \eqref{kappaV1}, then
	\begin{equation}\label{dot_v1}
	\begin{aligned}
	\dot{V}_{1} 
	\le&   -\frac{1}{2}\hat{e}^{T} [G(\hat{c}+\hat{\rho}) \otimes \mathcal{H} ] \hat{e}-\sum_{i=1}^{N} \frac{g_{i}\epsilon_{i}}{4} (c_{i}-\beta)^{2}+\Xi_{1}  -\kappa_{1} V_{1}+\kappa_{1} \sum_{i=1}^{N} g_{i}(c_{i}+\rho_{i})\hat{e}_{i}^{T}P\hat{e}_{i} +\sum_{i=1}^{N}\frac{g_{i}\kappa_{1}}{2}(c_{i}-\beta)^{2}\\
	=&-\kappa_{1} V_{1}-\frac{1}{2}\hat{e}^{T} [G(\hat{c}+\hat{\rho}) \otimes (\mathcal{H}-2\kappa_{1} P )] \hat{e} -\sum_{i=1}^{N} \frac{g_{i}(\epsilon_{i}-2\kappa_{1})}{4}(c_{i}-\beta)^{2}+\Xi_{1}.
	\end{aligned}
	\end{equation}
	Define $ \mu = 1 + 2\kappa_{1} $, then $ \mathcal{H}-2\kappa_{1} P >0 $ based on the LMI~\eqref{lmi}. Choose $ 0<\kappa_{1} \le \min_{i \in \mathbb{F}} \frac{\epsilon_{i}}{2} $, then we obtain
	\begin{equation}
	\dot{V}_{1} \le -\kappa_{1} V_{1}+\Xi_{1}.
	\end{equation}
	In light of Lemma~\ref{lemma_dotV_V}, we could deduce that $ V_{1} $ exponentially converges to the residual set $ \Pi_{1}= \{V_{1}: V_{1}< \frac{\Xi_{1}}{\kappa_{1}}\}$ with a convergence rate faster than $ e^{-\kappa_{1} t} $. From \eqref{v1} we have $ V_{1}\ge \min_{i \in \mathbb{F}} g_{i} [ \lambda_{min}(P) \|\hat{e}\|^{2} + \frac{1}{2}\sum_{i=1}^{N} (c_{i}-\beta)^{2}] $. Since $ \hat{e} (t) $ is uniformly ultimately bounded and $ \beta \geq \frac{5}{2\lambda_{0}}\max_{i \in \mathbb{F}} g_{i} $ is a constant, we can conclude that $ c_{i}, i\in \mathbb{F} $ are uniformly ultimately bounded.
	
	Furthermore, note from~\eqref{dot_v1} that if $  \|\hat{e}\|^{2} > \frac{2\Xi_{1}}{\lambda_{min}(G)\lambda_{min}(\mathcal{H}-2\kappa_{1} P)} $, then $ \dot{V}_{1} \le -\kappa_{1} V_{1} $. Therefore, $ \hat{e} $ is uniformly ultimately bounded satisfying
	\begin{equation}\label{hat_e_interval}
	\|\hat{e}\|^{2} \le \frac{2\Xi_{1}}{\lambda_{min}(G)\lambda_{min}(\mathcal{H}-2\kappa_{1} P)}.
	\end{equation}
	
	Thirdly, recall~\eqref{dot_Z_2_u0} as
	\begin{equation}
	\dot{\tilde{Z}}_{i}=(A+BK_{1})\tilde{Z}_{i}+\tilde{K} e_{i} -B(\alpha z(\zeta_{i}(t))+u_{0}(t))
	\end{equation}
	where $ \tilde{K}=[BK_{1}, \, -e^{A\tau}E] $ and $ e_{i}=[\tilde{v}_{i}^{T}, \, (e^{S\tau}\tilde{w}_{i}(t-\tau))^{T}]^{T} $.
	
	Let
	\begin{equation}\label{V2}
	V_{2}=\sum_{i=1}^{N} \tilde{Z}_{i}^{T} Q\tilde{Z}_{i} +\gamma_{1} V_{1}
	\end{equation}
	where $ \gamma_{1} $ is a positive constant to be designed later. Then
	\begin{equation}\label{deriva_V2}
	\dot{V}_{2} = -\tilde{Z}^{T}(I_{N}\otimes \mathcal{X})\tilde{Z} +2\tilde{Z}^{T}(\mathcal{L}^{-1}_{1} \otimes Q\tilde{K})\hat{e}+\Omega_{1} +\gamma_{1} \dot{V}_{1}
	\end{equation}
	where $ \Omega_{1} = - 2\tilde{Z}^{T}(I_{N}\otimes QB) [\alpha z(\zeta (t)) +\textbf{1}\otimes u_{0}(t)]$ and $ \mathcal{X}= -[Q(A+BK_{1}) +(A+BK_{1})^{T}Q]>0 $ based on the LMI~\eqref{lmi_Q}.  Here is the reason that we design $ \zeta _{i} (t) =B^{T}Q\tilde{Z}_{i}(t) $. We omit the detail which is similar as  \eqref{inequality_uo_z}, \eqref{inequality_uo_z1} and \eqref{inequality_uo_z2}. It is worth noting that when $ \|B^{T}Q\tilde{Z}_{i}(t)\| \le \sigma_{i}, i\in \mathbb{F} $, then
	\begin{equation*}
	- 2\tilde{Z}^{T}(I_{N}\otimes QB) \alpha z(\zeta (t)) =-2\alpha \sum_{i=1}^{N}\frac{\tilde{Z}_{i}^{T}QBB^{T}Q\tilde{Z}_{i}}{\sigma_{i}} \le 0.
	\end{equation*}
	Then it is easy to get
	\begin{equation}
	\Omega_{1} \le  2\epsilon \sum_{i=1}^{N}\sigma_{i}.
	\end{equation}
	By using Lemma~\ref{lemma_pq} we have 
	\begin{equation*}
	\begin{aligned}
	2\tilde{Z}^{T}(\mathcal{L}^{-1}_{1}\otimes Q\tilde{K})\hat{e}
	\le& \frac{1}{2}\tilde{Z}^{T}(I_{N}\otimes \mathcal{X})\tilde{Z} +\frac{2\lambda_{max}(\tilde{K}^{T}QQ\tilde{K})}{\lambda^{2}_{min}(\mathcal{L}_{1})\lambda_{min}(\mathcal{X})}\hat{e}^{T}\hat{e},
	\end{aligned}
	\end{equation*} 
	Denote 
	\begin{equation}\label{xi_2}
	\Xi_{2}=\gamma_{1}\Xi_{1} +2\epsilon \sum_{i=1}^{N}\sigma_{i}.
	\end{equation}
	Then substituting the above inequality and \eqref{V_1} into~\eqref{deriva_V2} gives
	\begin{equation}\label{deriv_V2_1}
	\begin{aligned}
	\dot{V}_{2} 
	\le & 	-\frac{1}{2}\tilde{Z}^{T}(I_{N}\otimes \mathcal{X})Z + \frac{2\lambda_{max}(\tilde{K}^{T}QQ\tilde{K})}{\lambda^{2}_{min}(\mathcal{L}_{1})\lambda_{min}(\mathcal{X})}\hat{e}^{T}\hat{e} -\frac{1}{2}\hat{e}^{T} [G(\hat{c}+\hat{\rho}) \otimes (\gamma_{1}-2) \mathcal{H} ] \hat{e} -\hat{e}^{T} [G(\hat{c}+\hat{\rho}) \otimes \mathcal{H} ] \hat{e}\\
	&-\gamma_{1} \sum_{i=1}^{N} \frac{g_{i}\epsilon_{i}}{4}(c_{i}-\beta)^{2}+\Xi_{2} - \kappa_{2}V_{2}+ \kappa_{2}V_{2}.
	\end{aligned}
	\end{equation}
	Since $ c_{i}(t) \ge \beta_{1}, \forall t\ge0 $, here we design $ \beta_{1}\ge 1 $ such that $ (\hat{c}+\hat{\rho}) \ge I $.
	Let $ \gamma_{1} \ge 2 $ temporarily, then $$ \hat{e}^{T} [G(\hat{c}+\hat{\rho}) \otimes (\gamma_{1}-2) \mathcal{H} ] \hat{e} \ge (\gamma_{1}-2)\lambda_{min}(G)\lambda_{min}(\mathcal{H})\hat{e}^{T}\hat{e}. $$
	Now choose $ \gamma_{1} \ge \frac{4\lambda_{max}(\tilde{K}^{T}QQ\tilde{K})}{\lambda^{2}_{min}(\mathcal{L}_{1})\lambda_{min}(\mathcal{X})\lambda_{min}(G)\lambda_{min}(\mathcal{H})} +2$ with calculating $ \kappa_{2}V_{2} $ similarly as~\eqref{kappaV1} such that \eqref{deriv_V2_1} turns to
	\begin{equation}\label{V2_2}
	\begin{aligned}
	\dot{V}_{2} \le & - \kappa_{2}V_{2}+\Xi_{2}-\frac{1}{2}\tilde{Z}^{T}[I_{N}\otimes (\mathcal{X}-2\kappa_{2} Q)]\tilde{Z} -\hat{e}^{T} [G(\hat{c}+\hat{\rho}) \otimes (\mathcal{H}-\gamma_{1} \kappa_{2}P) ] \hat{e} -\frac{\gamma_{1}g_{i}}{4} \sum_{i=1}^{N} (\epsilon_{i}-2\kappa_{2})(c_{i}-\beta)^{2}.
	\end{aligned}
	\end{equation}
	By choosing $ \kappa_{2}=\min \{\frac{\lambda_{min}(\mathcal{X})}{2\lambda_{max}(Q)}, \frac{\lambda_{min}(\mathcal{H})}{\gamma_{1}\lambda_{max}(P)}, \min_{i \in \mathbb{F}}\frac{\epsilon_{i}}{2}  \} $,
	we have 
	\begin{equation}
	\dot{V}_{2} \le -\kappa_{2} V_{2}+\Xi_{2}.
	\end{equation}
	Similar as the proof of boundedness of $ V_{1} $, based on the Lemma~\ref{lemma_dotV_V}, it can be deduced that $ V_{2} $ exponentially converges to the residual set $ \Pi_{2}= \{V_{2}: V_{2}< \frac{\Xi_{2}}{\kappa_{2}}\}$ with a convergence rate faster than $ e^{-\kappa_{2} t} $. From \eqref{V2} we have $ V_{2}\ge \lambda_{min}(Q)\|\tilde{Z}\|^{2}+ \gamma_{1} V_{1} $. Since $ V_{1} $ is uniformly ultimately bounded and $ \gamma_{1} $ is a chosen positive constant, we can conclude that $ \tilde{Z} $ is uniformly ultimately bounded 
	with $  \|\tilde{Z}\|^{2} \le \frac{\Xi_{2}}{\kappa_{2}\lambda_{min}(Q)} $.
	
	Furthermore, note from~\eqref{V2_2} that if $  \|\tilde{Z}\|^{2} > \frac{2\Xi_{2}}{\lambda_{min}(\mathcal{X}-2\kappa_{2} Q)} $, then $ \dot{V}_{2} \le -\kappa_{2} V_{2} $. Therefore, $ \tilde{Z} $ is uniformly ultimately bounded satisfying
	\begin{equation}\label{Z_mode}
	\|\tilde{Z}\| \le \sqrt{\min\{ \frac{\Xi_{2}}{\kappa_{2}\lambda_{min}(Q)}, \frac{2\Xi_{2}}{\lambda_{min}(\mathcal{X}-2\kappa_{2} Q)}\}}.
	\end{equation}
	If $  \|\hat{e}\|^{2} > \frac{\Xi_{2}}{\lambda_{min}(G)\lambda_{min}(\mathcal{H}-\gamma_{1} \kappa_{2}  P)} $, then $ \dot{V}_{2} \le -\kappa_{2} V_{2} $. Combined with~\eqref{hat_e_interval}, we have
	\begin{equation}\label{hat_e_interval_2}
	\|\hat{e}\|^{2} \le \min\{ \frac{2\Xi_{1}}{\lambda_{min}(G)\lambda_{min}(\mathcal{H}-2\kappa_{1} P)}, \frac{\Xi_{2}}{\lambda_{min}(G)\lambda_{min}(\mathcal{H}-\gamma_{1} \kappa_{2}  P)}\}.
	\end{equation}
	From~\eqref{hat_e} where $ e=(\mathcal{L}_{1}^{-1}\otimes I) \hat{e} $, we have $ \|e\| \le \frac{\| \hat{e}\|}{\lambda_{min}(\mathcal{L}_{1})} $. In addition to $ e_{i}=[\tilde{v}_{i}^{T}, \, (e^{S\tau}\tilde{w}_{i}(t-\tau))^{T}]^{T} $, we come to conclusion that $ e^{S\tau}\tilde{w}(t-\tau) $ is uniformly ultimately bounded satisfying
	\begin{equation}\label{e_interval}
	\|e^{S\tau}\tilde{w}(t-\tau) \| \le \frac{\| \hat{e}\|}{\lambda_{min}(\mathcal{L}_{1})}
	\end{equation}
	where $ \hat{e} $ satisfies~\eqref{hat_e_interval_2}.
	
	Fourthly,
	the exact prediction at time $ t $ of the consensus tracking error $ \tilde{x}_{i}(t) $ of the system~\eqref{z_transformed_u0} at time $ t+\tau $ is 
	\begin{equation*}
	x_{pi}(t)=e^{A\tau}\tilde{x}_{i}(t) + \int_{t-\tau}^{t} e^{A(t -s)}[B(u_{i}(s)-u_{0}(s))+E\bar{w}_{i}(s+\tau)]ds
	\end{equation*}
	for all $ t\ge 0 $, which, in other words, $ x_{pi}(t)=\tilde{x}_{i}(t+\tau) $. Similarly, $ \tilde{Z}_{i}(t) $ in~\eqref{z_transformed_u0} estimate $ \tilde{x}_{i}(t+\tau) $, and the estimating error is
	\begin{equation}\label{z_transformed_variable_error}
	\tilde{x}_{i}(t)-\tilde{Z}_{i}(t-\tau)=x_{pi}(t-\tau)-\tilde{Z}_{i}(t-\tau)=-\int_{t-\tau}^{t} e^{A(t -s)}Ee^{S\tau}\tilde{w}_{i}(s)ds.
	\end{equation}
	Then we conclude that the consensus tracking error $ \tilde{x}_{i}(t) $ converges exponentially to the residual set $ \Pi $ in Theorem~\ref{theorem_disturbance_u0}.
\end{proof}

\begin{remark}
	For $ u_{i}(t-\tau) $ of follower $ i $ in~\eqref{input_u0}, $ \tilde{Z}_{i}(t-\tau), j \in \mathbb{F} $ in~\eqref{z_transformed_u0} is needed in calculation of $ z(\zeta_{i}(t-\tau)) $. Since $ \tilde{Z}_{i}(t-\tau) $ is not defined for $ t \in [0,\, \tau] $, set $ \tilde{Z}_{i}(t-\tau)=\tilde{Z}_{i}(0) $ for all $ t \in [0,\, \tau] $.
\end{remark}

\begin{remark}\label{remark_tuning_parameters}
	From~\eqref{xi_1}, the value of $ \Xi_{1} $ is proportional to the upper bound $ \epsilon $ of leader's input, $ \alpha $ satisfying $ \alpha \ge \epsilon $, $ \sigma_{i} $ in~\eqref{z_discontinuous}, the followers' number $ N $, and $ a_{i0} $ which means how many followers can receive the leader's information. Then from~\eqref{tilde_x_intervel}, \eqref{Z_mode}-\eqref{e_interval} and \eqref{xi_2}, the upper bound of consensus tracking error $ \tilde{x}_{i}(t) $ can be controlled to be small by tuning the above parameters.
\end{remark}

\section{Simulation}\label{simulaiton}

\textbf{Example 1.}
This example verifies Theorem~\ref{theorem_disturbance}. Consider system~\eqref{eq:dynamics} and~\eqref{w_disturbance} with
$$
A =  
\begin{bmatrix}
-4 & 1\\
1 & 0
\end{bmatrix},
B =\begin{bmatrix}
1 & 2\\
2 & 1
\end{bmatrix},
S =\begin{bmatrix}
0 & 1\\
-1 & 0
\end{bmatrix}
$$
and $ F=I_{2}, E=BF $. Then $ (A,B) $ is controllable and $ (S,E) $ is observable. $ \lambda_{1}(A)=-4.2361 $ and $\lambda_{2}(A)= 0.2361 $ means that our fully distributed controller can be applied to any open loop linear MASs with constant input delay and disturbances.
The communication topology $ \mathcal{G} $ is shown in Fig. \ref{communication topology} satisfying Assumption~\ref{assumptiondirected}. 	
Solving LMI~\eqref{eq:ARE} gets\\
\begin{equation*}
P=\begin{bmatrix}
0.3554 &   0.0230  & -0.1985 &  -0.0195\\
0.0230  &  0.5864 &  -0.7986  &  0.0854\\
-0.1985 &  -0.7986 &   3.5022 &  -0.7468\\
-0.0195 &   0.0854  & -0.7468 &   2.4724
\end{bmatrix},
\end{equation*} 
and then other parameters can be calculated accordingly.
Using pole placement method, assign eigenvalues of $ A+BK_{1} $ as -5,-10 and get $ K_{1}=[   -0.3333 ,  -6.3333;
-0.3333 ,   2.6667
] $. Similarly, when there is no input delay, the solution to LMI $ P^{'}A_{T}^{'}+A_{T}^{'T}P^{'}-2T^{T}T < 0 $ is
\begin{equation*}
P^{'}=\begin{bmatrix}
0.3337  &  0.0200 &  -0.2022  & -0.0351\\
0.0200  &  0.6059 &  -0.7971  &  0.1013\\
-0.2022 &  -0.7971 &   3.4524 &  -0.7308\\
-0.0351 &   0.1013 &  -0.7308 &   2.4451
\end{bmatrix}.
\end{equation*} 

Set the initial states as $ x_{ij}(0)= 4\delta +1, w_{ij}(0)= 10\delta -5$, $ c_{i}(0)=4\delta +1, i\in \mathbb{F} $ and $ x_{0j}(0)= 3\delta +5,w_{0j}(0)= 3\delta +1, j\in \textbf{I}[1,2]$, where $ \delta $ is a pseudorandom value with a uniform distribution on the interval $ (0,1) $. The input delay is taken as $ \tau=0.09s $ and $ u(t)=0, \forall t \in [-\tau, 0] $. 
\begin{remark}
	Compared with the values of initial states, the values of disturbances are quite large.
\end{remark}

Fig.~\ref{fig:comparison} shows the comparison result under the same initial conditions without input delay and with input delay, respectively. The consensus tracking errors are illustrated in Fig.~\ref{fig:tiled_x_without_delay} and~\ref{fig:tiled_x_with_delay} where the delay effect is well compensated. It can be seen from Fig.~\ref{fig:input_without_delay} and~\ref{fig:input_with_delay} that at the beginning the delayed system needs larger control input. Fig.~\ref{fig:observer_error} and~\ref{fig:disturbance_observer_error} present the ESO $ \bar{Z}_{i}(t)=[v_{i}(t)^{T}, \hat{w}_{i}(t)^{T}]^{T} $ tracking errors which state clearly the effectiveness of fully distributed adaptive ESO. Particularly, Fig.~\ref{fig:ci_varrho_rho} verifies the assumption that $ \lim_{t \to \infty} c_{i}(t-\tau)=c_{i}(t), \lim_{t \to \infty} \rho_{i}(t-\tau)=\rho_{i}(t)$ and $ \lim_{t \to \infty} \varrho_{i}(t-\tau)=\varrho_{i}(t), i\in \mathbb{F} $.
\begin{figure}[!tb]
	\centering
	\includegraphics[width=0.5\hsize]{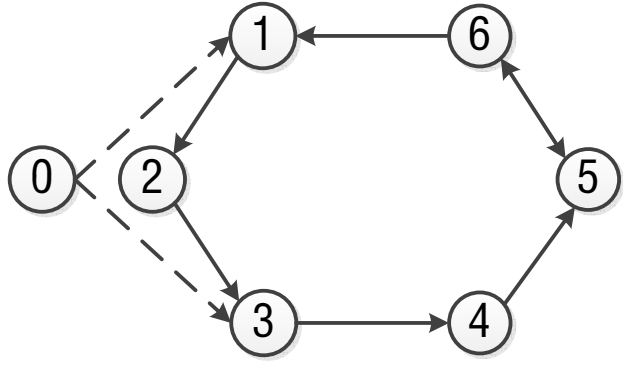}
	\caption{The directed communication topology $ \mathcal{G} $.}
	\label{communication topology}
\end{figure}

\begin{figure*}
	\centering
	\begin{subfigure}[h]{0.47\textwidth}
		\includegraphics[width=\textwidth]{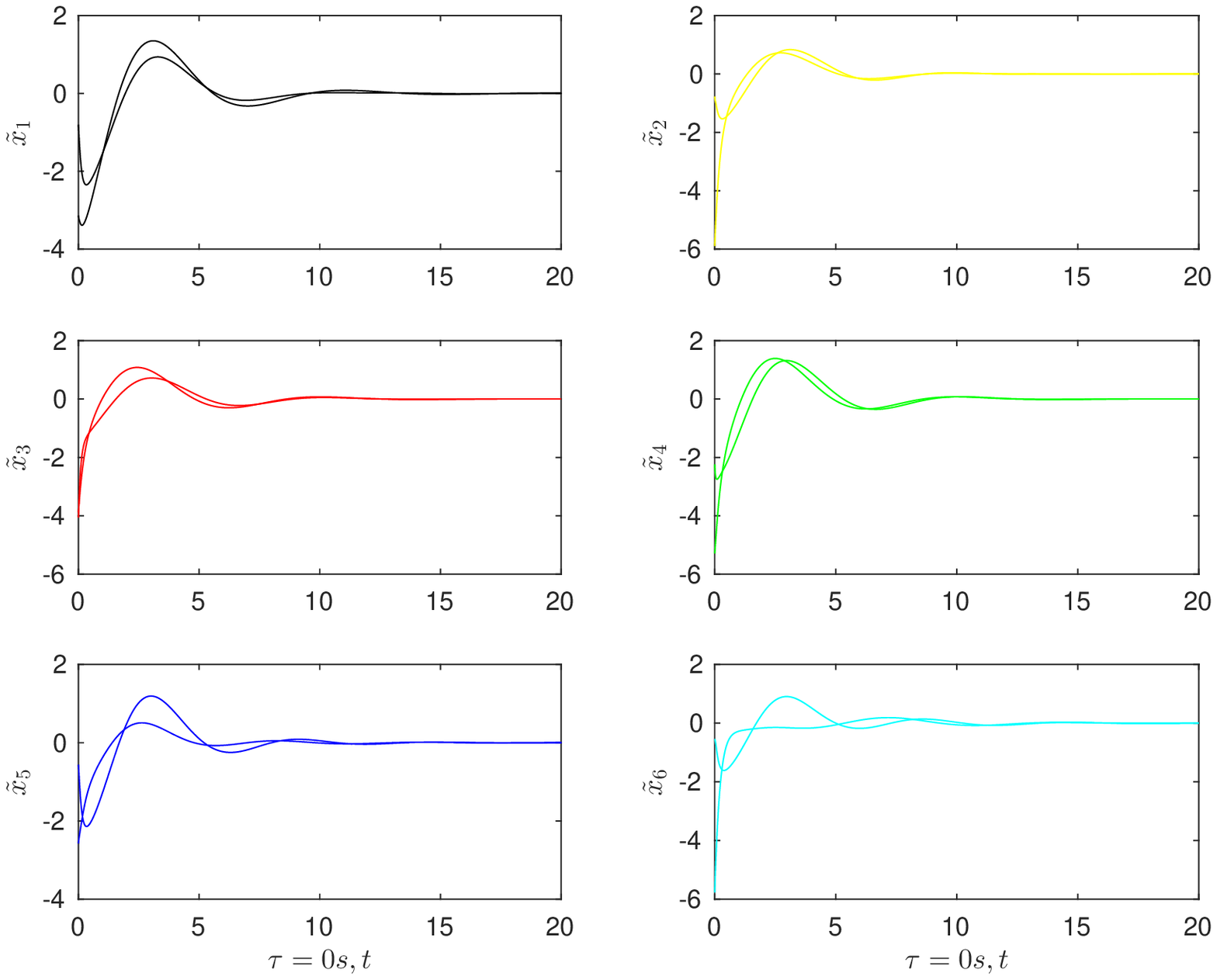}
		\caption{Consensus tracking error without delay.}
		\label{fig:tiled_x_without_delay}
	\end{subfigure}
	~
	\begin{subfigure}[h]{0.47\textwidth}
		\includegraphics[width=\textwidth]{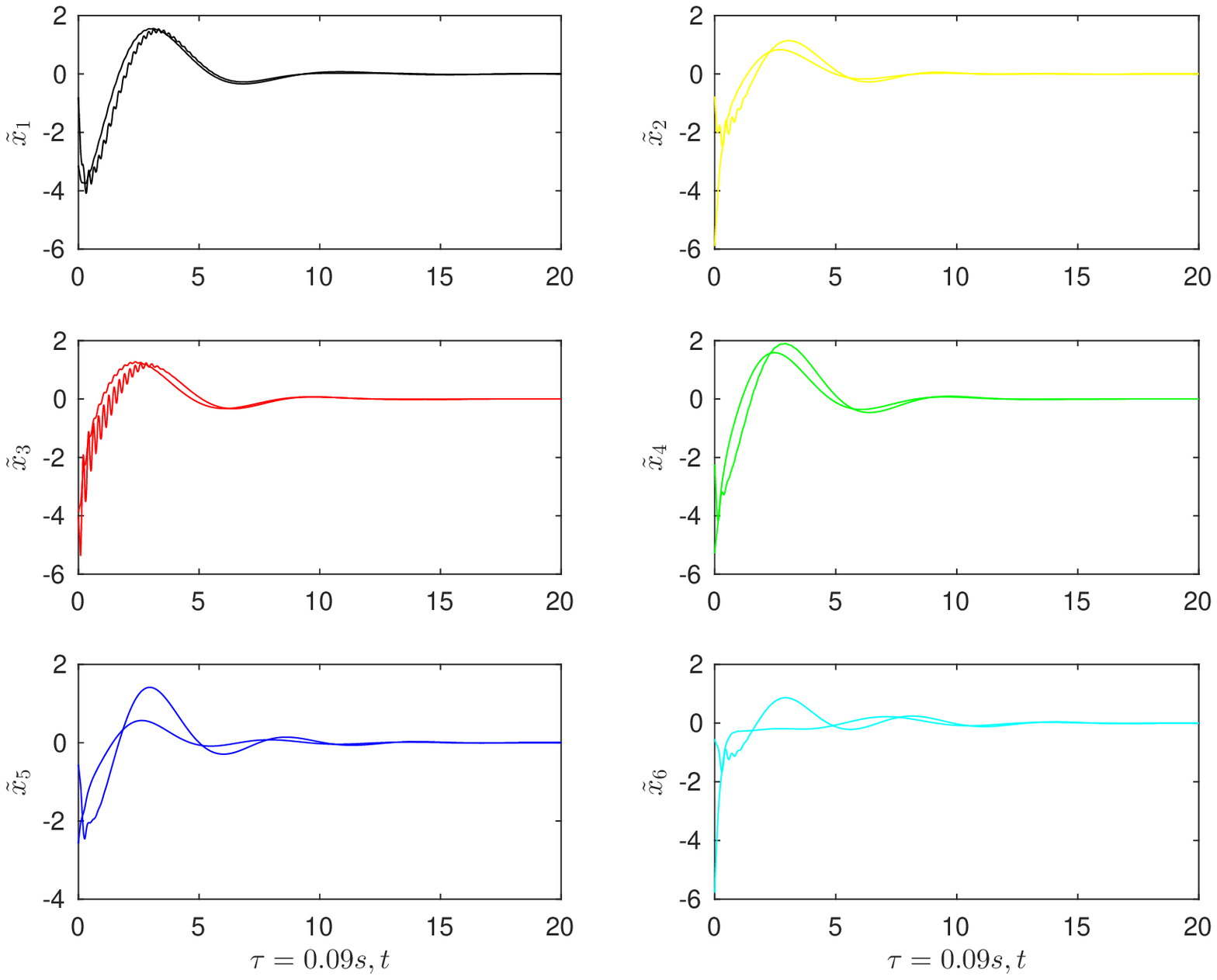}
		\caption{Consensus tracking error with delay.}
		\label{fig:tiled_x_with_delay}
	\end{subfigure}
	~
	\begin{subfigure}[h]{0.47\textwidth}
		\includegraphics[width=\textwidth]{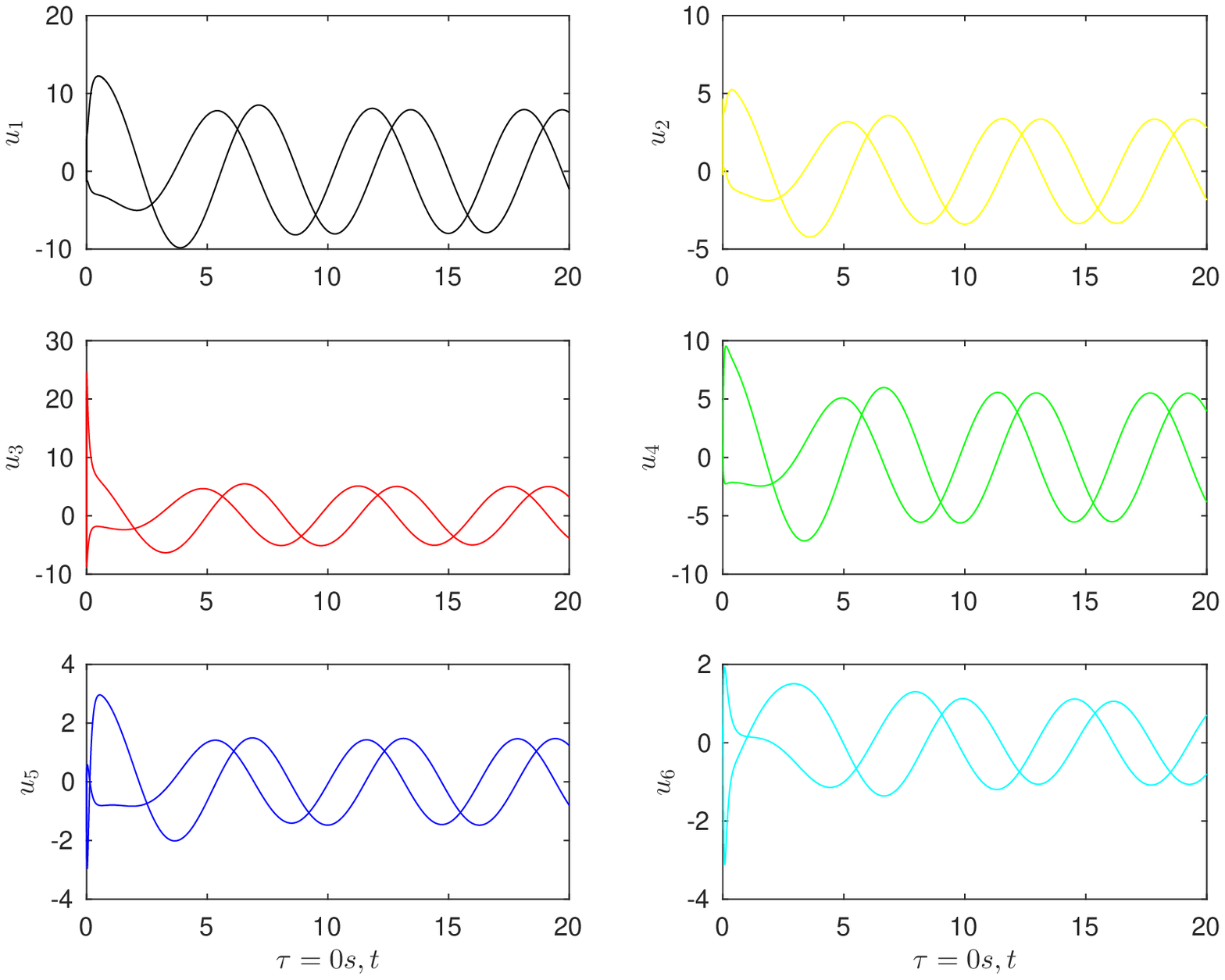}
		\caption{Control input without delay.}
		\label{fig:input_without_delay}
	\end{subfigure} 
	~ 
	\begin{subfigure}[h]{0.47\textwidth}
		\includegraphics[width=\textwidth]{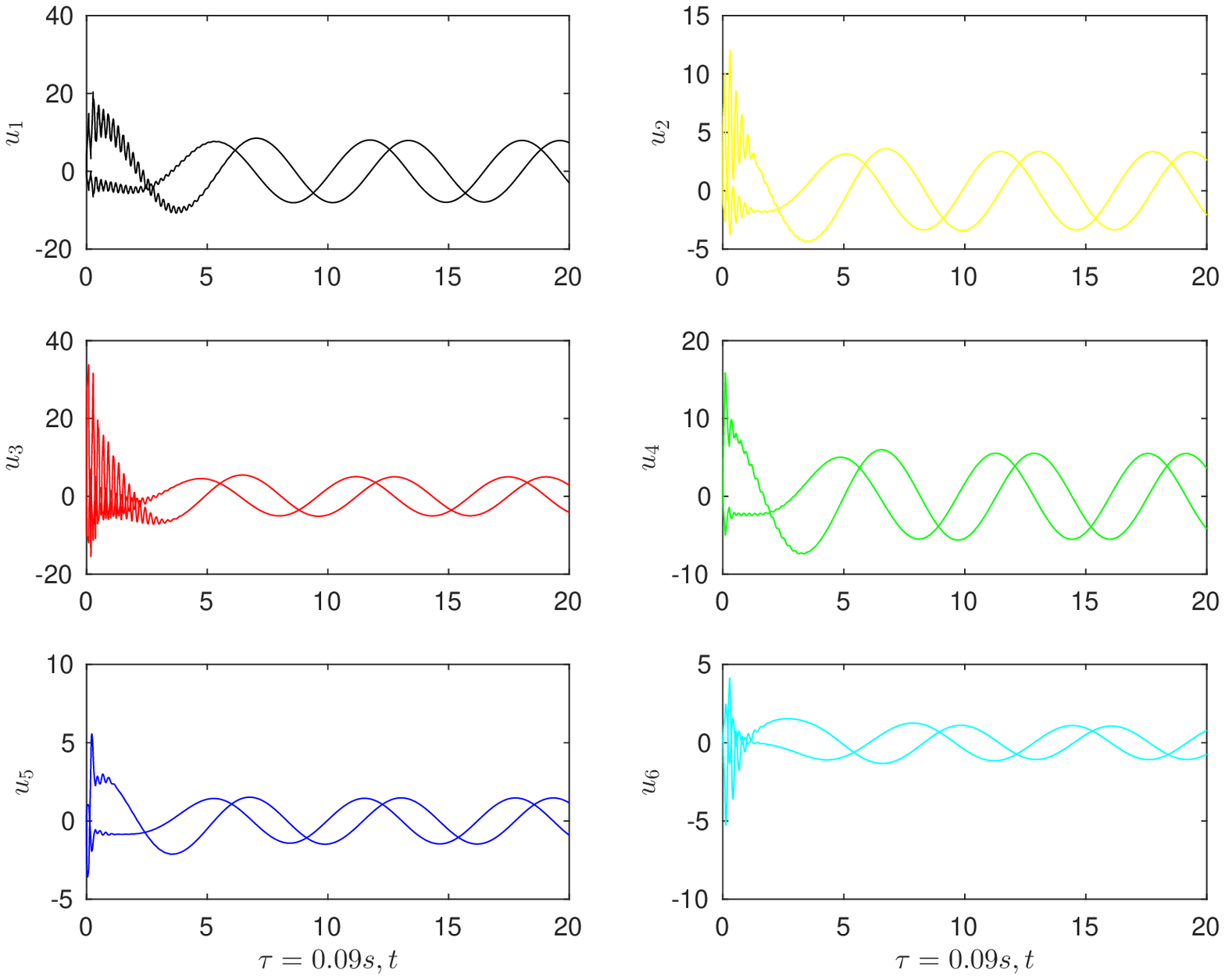}
		\caption{Control input with delay.}
		\label{fig:input_with_delay}
	\end{subfigure}
	~ 
	\begin{subfigure}[h]{0.47\textwidth}
		\includegraphics[width=\textwidth]{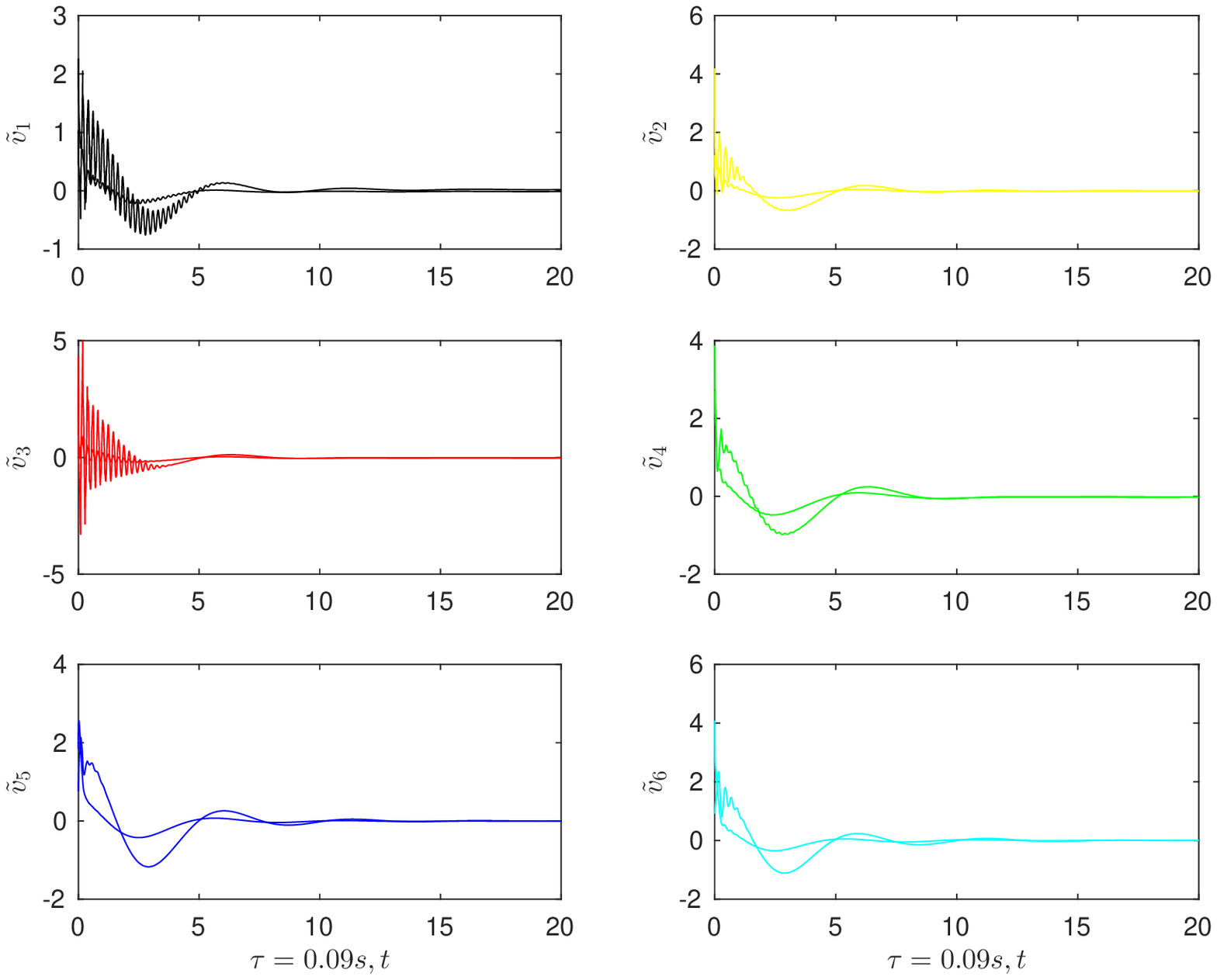}
		\caption{Observer error $ \tilde{v}=v-\tilde{Z} $.}
		\label{fig:observer_error}
	\end{subfigure}
	~ 
	\begin{subfigure}[h]{0.47\textwidth}
		\includegraphics[width=\textwidth]{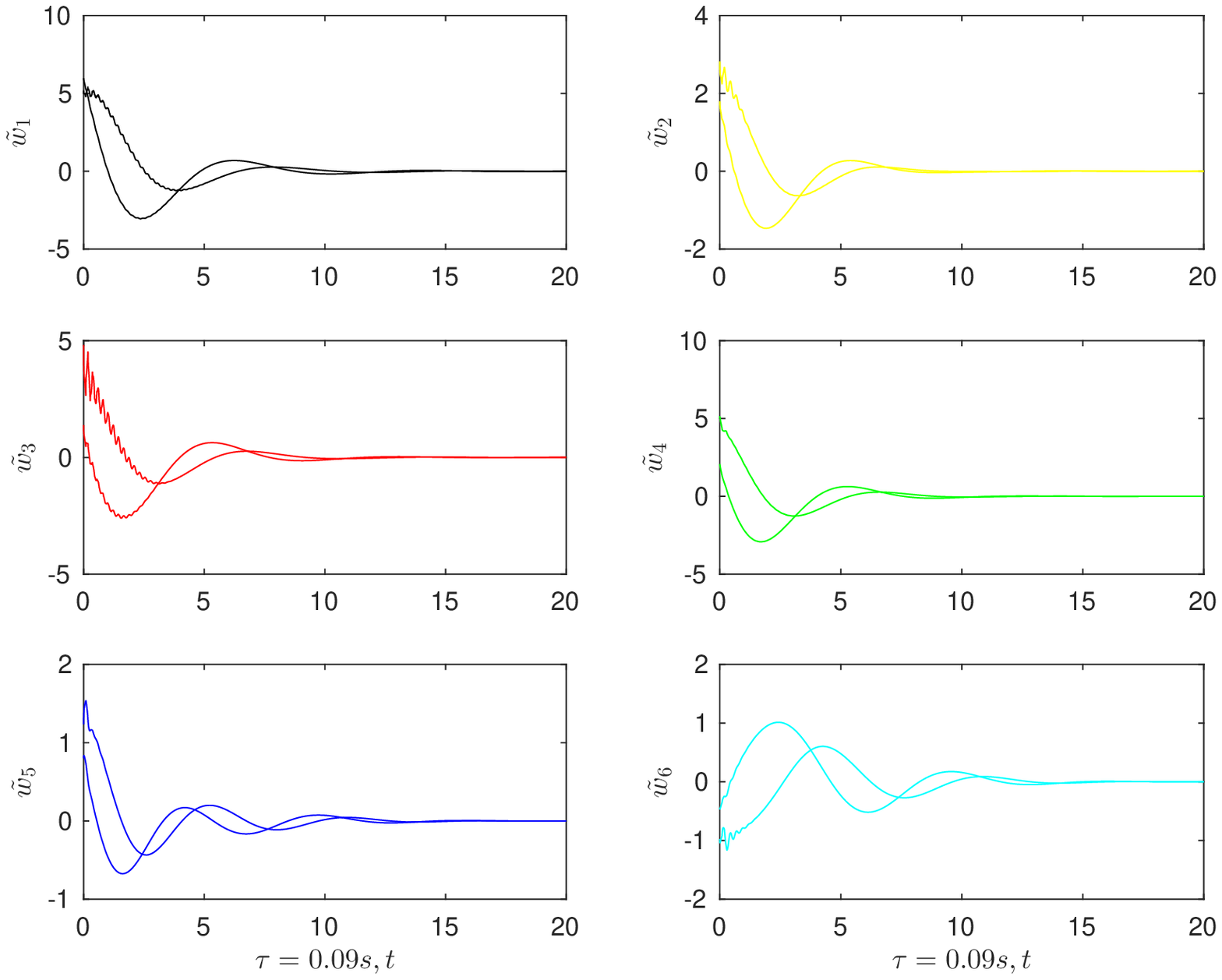}
		\caption{Disturbance observer error $ \tilde{w}=\hat{w}-\bar{w} $.}
		\label{fig:disturbance_observer_error}
	\end{subfigure}
	\caption{Comparison of delay-free and delayed results verifying Theorem~\ref{theorem_disturbance}.}\label{fig:comparison}
\end{figure*}

\begin{figure}[!tb]
	\centering
	\includegraphics[width=0.85\hsize]{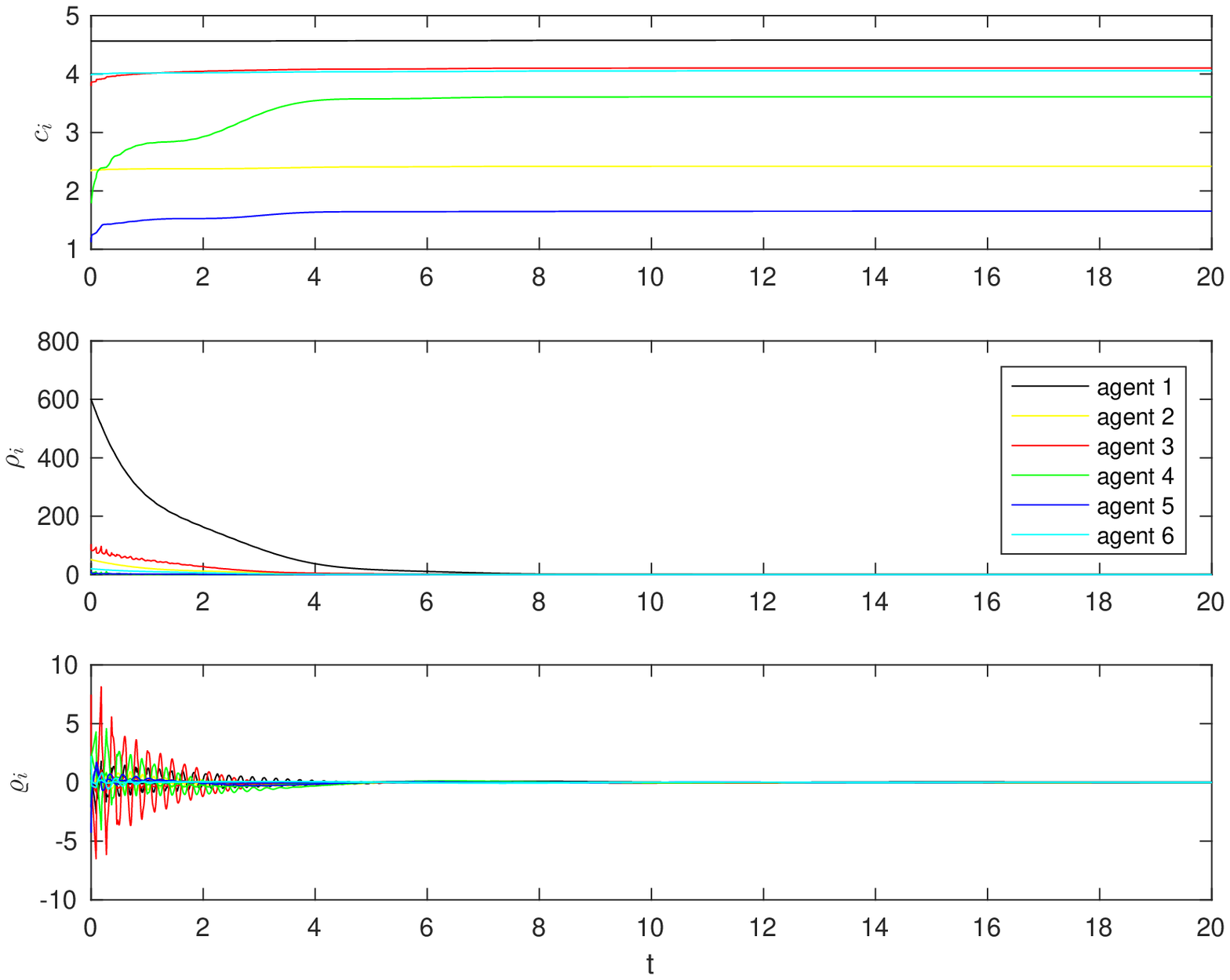}
	\caption{Controller parameters $ c_{i} $ (top), $ \rho_{i} $ (center), $ \varrho_{i} $ (bottom).}
	\label{fig:ci_varrho_rho}
\end{figure}

\textbf{Example 2.}
This example verifies Theorem~\ref{theorem_disturbance_u0}.
Define the leader's bounded input as $ u_{0}(t)=[e^{-t}+1, 2+sin(\frac{t}{2})]^{T} $ and $ \alpha =4, \beta_{1}=1, \epsilon_{i}=0.1, \sigma_{i}=0.005, i\in \mathbb{F} $.
Other initial conditions are the same as the Example 1. Choose $ \mu=2, P>0,Q>I $ and Solve LMIs~\eqref{lmi} and \eqref{lmi_Q}, then
\begin{equation*}
P=\begin{bmatrix}
0.2220 &   0.1066&   -0.1335 &  -0.1098\\
0.1066  &  0.5897  & -0.6273 &  -0.1022\\
-0.1335 &  -0.6273 &   1.2235 &   0.0287\\
-0.1098 &  -0.1022 &   0.0287 &   0.1399
\end{bmatrix},
\end{equation*} 
\begin{equation*}
Q=\begin{bmatrix}
4.0340 &  -0.0000\\
-0.0000  &  2.4367
\end{bmatrix}.
\end{equation*} 

From Fig.~\ref{fig:tiled_x_uub} we can see that the consensus tracking error is indeed uniformly ultimately bounded. We can also tune the controller parameters based on Remark~\ref{remark_tuning_parameters} to control the error as small as possible. 
Fig.~\ref{fig:ci_varrho_rho_with_leader_input} still verifies $ \lim_{t \to \infty} c_{i}(t-\tau)=c_{i}(t), \lim_{t \to \infty} \rho_{i}(t-\tau)=\rho_{i}(t)$ and $ \lim_{t \to \infty} \varrho_{i}(t-\tau)=\varrho_{i}(t), i\in \mathbb{F} $ with time goes on.
In addition, the trajectories of leader and followers are illustrated in Fig.~\ref{fig:state_trajectory}.

\begin{figure}
	\centering
	\begin{subfigure}[h]{0.47\textwidth}
		\includegraphics[width=\textwidth]{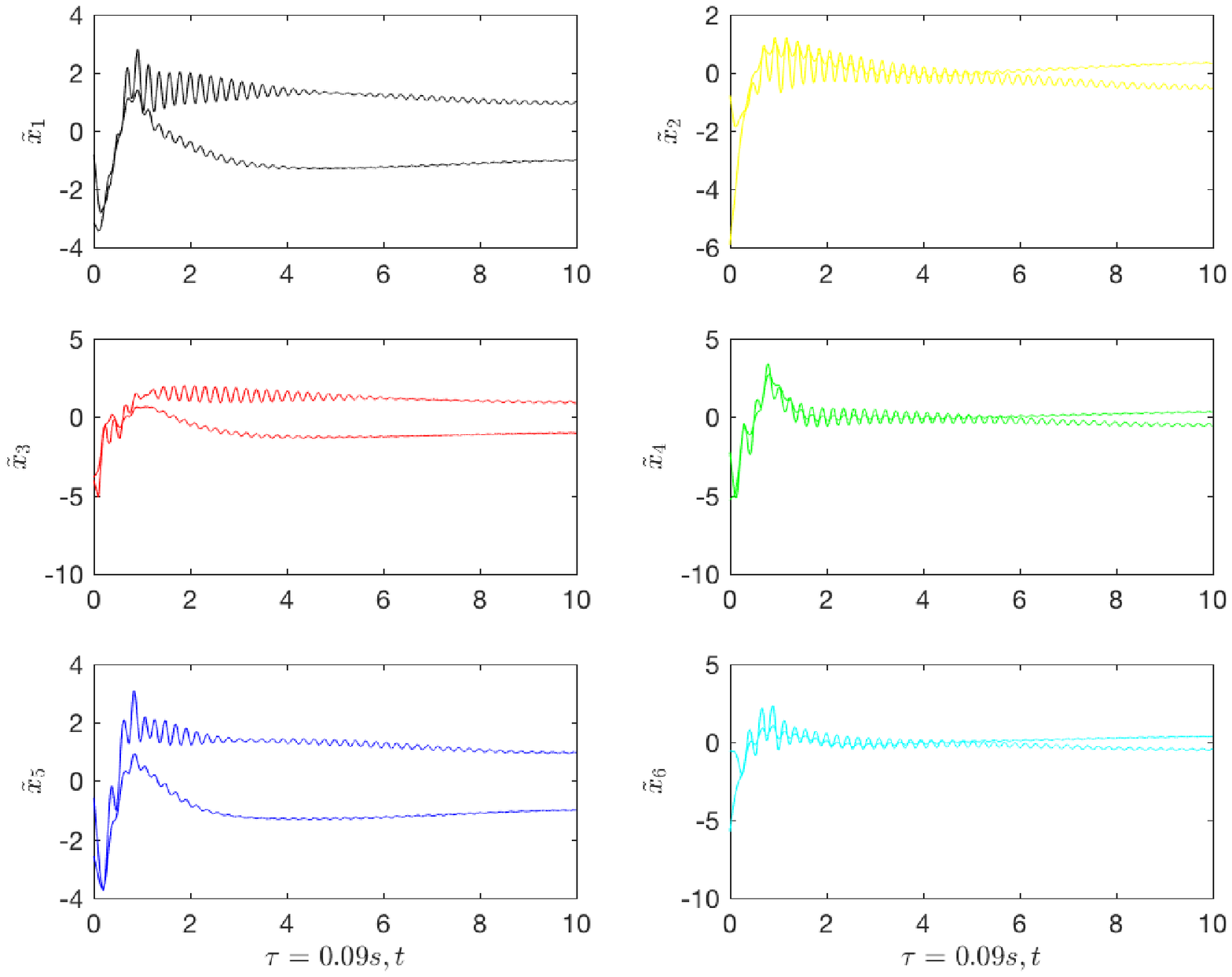}
		\caption{Uniformly ultimately bounded error.}
		\label{fig:tiled_x_uub}
	\end{subfigure}
	~ 
	\begin{subfigure}[h]{0.47\textwidth}
		\includegraphics[width=\textwidth]{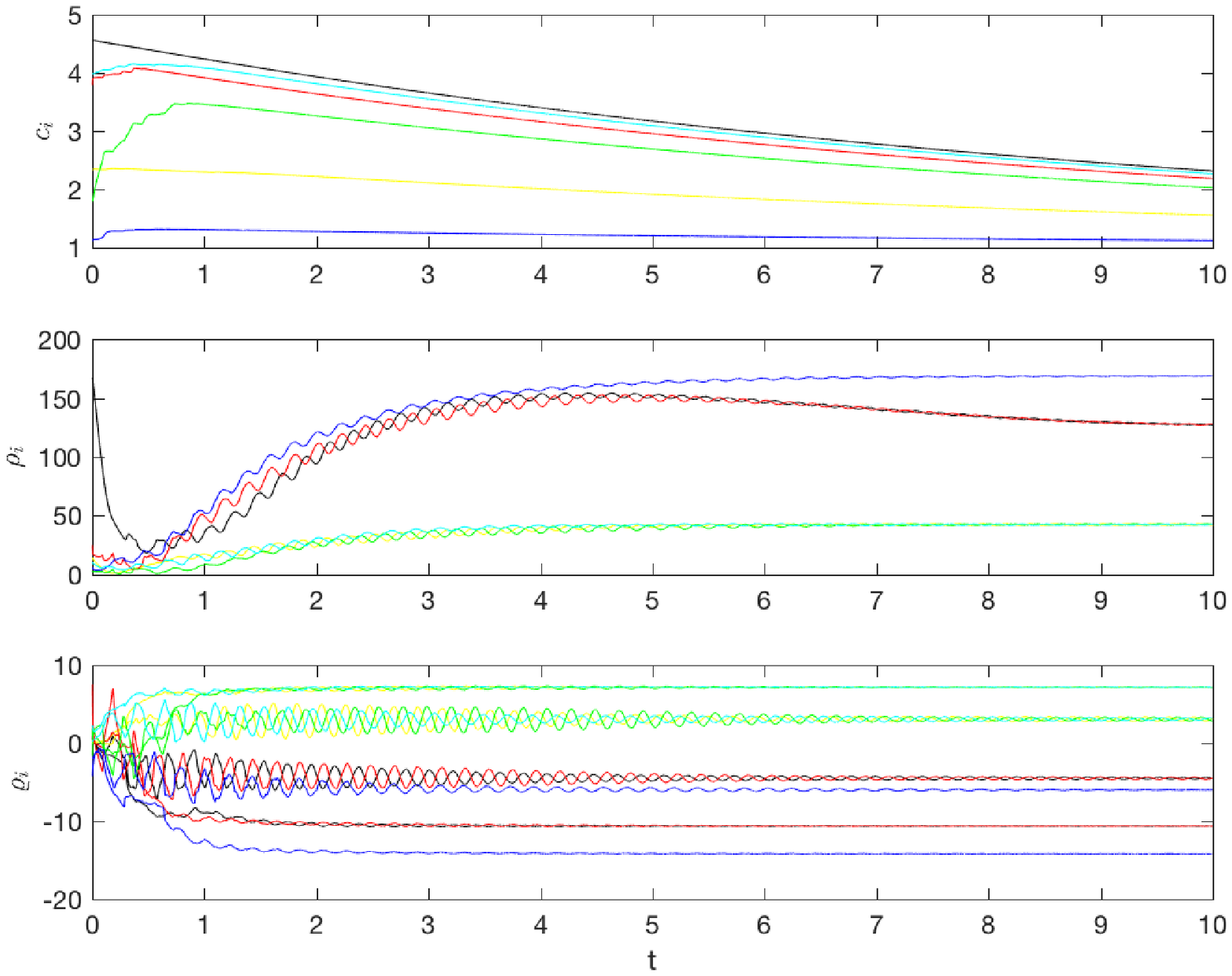}
		\caption{$ c_{i} $ (top), $ \rho_{i} $ (center), $ \varrho_{i} $ (bottom).}
		\label{fig:ci_varrho_rho_with_leader_input}
	\end{subfigure}
	~
	\begin{subfigure}[h]{1\textwidth}
		\includegraphics[width=\textwidth]{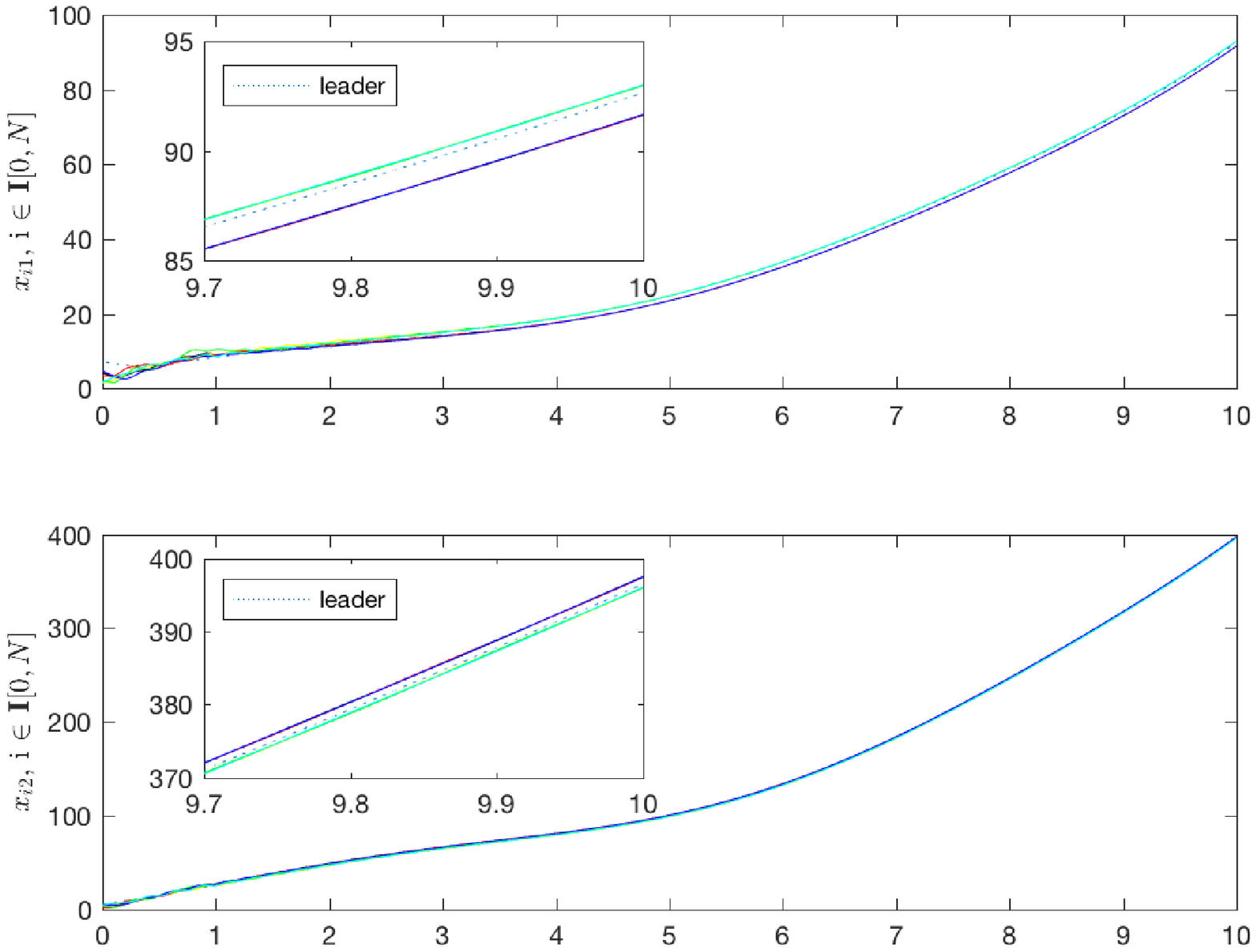}
		\caption{State trajectories.}
		\label{fig:state_trajectory}
	\end{subfigure}
	\caption{Consensus tracking with leader's bounded input verifying Theorem~\ref{theorem_disturbance_u0}.}\label{fig:with_leader_input}
\end{figure}

\section{Conclusion}\label{conclusion}
Designing the fully distributed consensus controller for MASs with an unknown leader subject to input delay and disturbances under the directed communication topology is challenging and important. To complete such a task, novel adaptive predictive extended state observers (ESOs) are proposed using the relative state signals of neighbors. 
The detail steps about how to design the variables for nonlinear function $ z(\cdot) $ in~\eqref{z_discontinuous} is presented.
Considering the various heterogeneity in reality, future work will focus on heterogeneous linear MAS consensus tracking with unknown leader, disturbances and time-varying delay without knowing its upper bound.

\section{Acknowledgements}\label{acknowledgements}
This work was supported by China Scholarship Council, the National Natural Science Foundation of China under
Grants 61503016 and 61403019, the Fundamental Research
Funds for the Central Universities under Grants  2017JBM067, and YWF-15-SYS-JTXY-007,YWF-16-
BJ-Y-21, the National Key R\&D  Program of China under Grant 2017YFB0103202.

\bibliographystyle{model1-num-names}
\bibliography{refrence}

\begin{thebibliography}{40}
\expandafter\ifx\csname natexlab\endcsname\relax\def\natexlab#1{#1}\fi
\providecommand{\bibinfo}[2]{#2}
\ifx\xfnm\relax \def\xfnm[#1]{\unskip,\space#1}\fi
\bibitem[{Wang et~al.(????)Wang, Wei, Wu, Huang, and Xu}]{wang10pinning}
\bibinfo{author}{J.-L. Wang}, \bibinfo{author}{P.-C. Wei},
  \bibinfo{author}{H.-N. Wu}, \bibinfo{author}{T.~Huang},
  \bibinfo{author}{M.~Xu},
\newblock \bibinfo{title}{Pinning synchronization of complex dynamical networks
  with multiweights},
\newblock \bibinfo{journal}{IEEE Trans. Syst. Man Cy.-S, published online, DOI}
  \bibinfo{volume}{10} (????).
\bibitem[{Jadbabaie et~al.(2003)Jadbabaie, Lin, and
  Morse}]{jadbabaie_coordination_2003}
\bibinfo{author}{A.~Jadbabaie}, \bibinfo{author}{J.~Lin},
  \bibinfo{author}{A.~S. Morse},
\newblock \bibinfo{title}{Coordination of groups of mobile autonomous agents
  using nearest neighbor rules},
\newblock \bibinfo{journal}{IEEE Transactions on Automatic Control}
  \bibinfo{volume}{48} (\bibinfo{year}{2003}) \bibinfo{pages}{988--1001}.
\bibitem[{Olfati-Saber and Murray(2004)}]{olfati-saber_consensus_2004}
\bibinfo{author}{R.~Olfati-Saber}, \bibinfo{author}{R.~M. Murray},
\newblock \bibinfo{title}{Consensus problems in networks of agents with
  switching topology and time-delays},
\newblock \bibinfo{journal}{IEEE Transactions on Automatic Control}
  \bibinfo{volume}{49} (\bibinfo{year}{2004}) \bibinfo{pages}{1520--1533}.
\bibitem[{Ren and Beard(2005)}]{ren_consensus_2005}
\bibinfo{author}{W.~Ren}, \bibinfo{author}{R.~W. Beard},
\newblock \bibinfo{title}{Consensus seeking in multiagent systems under
  dynamically changing interaction topologies},
\newblock \bibinfo{journal}{IEEE Transactions on Automatic Control}
  \bibinfo{volume}{50} (\bibinfo{year}{2005}) \bibinfo{pages}{655--661}.
\bibitem[{Bai et~al.(2017)Bai, Wen, Rahmani, and Yu}]{bai2017distributed}
\bibinfo{author}{J.~Bai}, \bibinfo{author}{G.~Wen},
  \bibinfo{author}{A.~Rahmani}, \bibinfo{author}{Y.~Yu},
\newblock \bibinfo{title}{Distributed consensus tracking for the
  fractional-order multi-agent systems based on the sliding mode control
  method},
\newblock \bibinfo{journal}{Neurocomputing} \bibinfo{volume}{235}
  (\bibinfo{year}{2017}) \bibinfo{pages}{210--216}.
\bibitem[{Wen et~al.(2016)Wen, Yu, Peng, and Wang}]{wen2016dynamical}
\bibinfo{author}{G.~Wen}, \bibinfo{author}{Y.~Yu}, \bibinfo{author}{Z.~Peng},
  \bibinfo{author}{H.~Wang},
\newblock \bibinfo{title}{Dynamical group consensus of heterogenous multi-agent
  systems with input time delays},
\newblock \bibinfo{journal}{Neurocomputing} \bibinfo{volume}{175}
  (\bibinfo{year}{2016}) \bibinfo{pages}{278--286}.
\bibitem[{Jiang et~al.(2018)Jiang, Liu, Wu, and Zhu}]{jiang2018asynchronous}
\bibinfo{author}{F.~Jiang}, \bibinfo{author}{B.~Liu}, \bibinfo{author}{Y.~Wu},
  \bibinfo{author}{Y.~Zhu},
\newblock \bibinfo{title}{Asynchronous consensus of second-order multi-agent
  systems with impulsive control and measurement time-delays},
\newblock \bibinfo{journal}{Neurocomputing} \bibinfo{volume}{275}
  (\bibinfo{year}{2018}) \bibinfo{pages}{932--939}.
\bibitem[{Cheng et~al.(2016)Cheng, Wang, Ren, Hou, and
  Tan}]{cheng2016convergence}
\bibinfo{author}{L.~Cheng}, \bibinfo{author}{Y.~Wang},
  \bibinfo{author}{W.~Ren}, \bibinfo{author}{Z.-G. Hou},
  \bibinfo{author}{M.~Tan},
\newblock \bibinfo{title}{On convergence rate of leader-following consensus of
  linear multi-agent systems with communication noises},
\newblock \bibinfo{journal}{IEEE Transactions on Automatic Control}
  \bibinfo{volume}{61} (\bibinfo{year}{2016}) \bibinfo{pages}{3586--3592}.
\bibitem[{Wu et~al.(2017)Wu, Xu, Lu, Wu, and Huang}]{wu2017event}
\bibinfo{author}{Z.-G. Wu}, \bibinfo{author}{Y.~Xu}, \bibinfo{author}{R.~Lu},
  \bibinfo{author}{Y.~Wu}, \bibinfo{author}{T.~Huang},
\newblock \bibinfo{title}{Event-triggered control for consensus of multiagent
  systems with fixed/switching topologies},
\newblock \bibinfo{journal}{IEEE Transactions on Systems, Man, and Cybernetics:
  Systems}  (\bibinfo{year}{2017}).
\bibitem[{Li et~al.(2010)Li, Duan, Chen, and Huang}]{li2010consensus}
\bibinfo{author}{Z.~Li}, \bibinfo{author}{Z.~Duan}, \bibinfo{author}{G.~Chen},
  \bibinfo{author}{L.~Huang},
\newblock \bibinfo{title}{Consensus of multiagent systems and synchronization
  of complex networks: A unified viewpoint},
\newblock \bibinfo{journal}{IEEE Transactions on Circuits and Systems I:
  Regular Papers} \bibinfo{volume}{57} (\bibinfo{year}{2010})
  \bibinfo{pages}{213--224}.
\bibitem[{Li et~al.(2013)Li, Ren, Liu, and Xie}]{li_distributed_2013-1}
\bibinfo{author}{Z.~Li}, \bibinfo{author}{W.~Ren}, \bibinfo{author}{X.~Liu},
  \bibinfo{author}{L.~Xie},
\newblock \bibinfo{title}{Distributed consensus of linear multi-agent systems
  with adaptive dynamic protocols},
\newblock \bibinfo{journal}{Automatica} \bibinfo{volume}{49}
  (\bibinfo{year}{2013}) \bibinfo{pages}{1986--1995}.
\bibitem[{Mado{\'n}ski and Herman(2015)}]{madonski2015survey}
\bibinfo{author}{R.~Mado{\'n}ski}, \bibinfo{author}{P.~Herman},
\newblock \bibinfo{title}{Survey on methods of increasing the efficiency of
  extended state disturbance observers},
\newblock \bibinfo{journal}{ISA transactions} \bibinfo{volume}{56}
  (\bibinfo{year}{2015}) \bibinfo{pages}{18--27}.
\bibitem[{Chen et~al.(2016)Chen, Yang, Guo, and
  Li}]{chen_disturbance-observer-based_2016}
\bibinfo{author}{W.~H. Chen}, \bibinfo{author}{J.~Yang},
  \bibinfo{author}{L.~Guo}, \bibinfo{author}{S.~Li},
\newblock \bibinfo{title}{Disturbance-{Observer}-{Based} {Control} and
  {Related} {Methods}: {An} {Overview}},
\newblock \bibinfo{journal}{IEEE Transactions on Industrial Electronics}
  \bibinfo{volume}{63} (\bibinfo{year}{2016}) \bibinfo{pages}{1083--1095}.
\bibitem[{Sun et~al.(2016)Sun, Geng, Lv, Li, and Ding}]{sun2016distributed}
\bibinfo{author}{J.~Sun}, \bibinfo{author}{Z.~Geng}, \bibinfo{author}{Y.~Lv},
  \bibinfo{author}{Z.~Li}, \bibinfo{author}{Z.~Ding},
\newblock \bibinfo{title}{Distributed adaptive consensus disturbance rejection
  for multi-agent systems on directed graphs},
\newblock \bibinfo{journal}{IEEE Transactions on Control of Network Systems}
  (\bibinfo{year}{2016}).
\bibitem[{Richard(2003)}]{richard2003time}
\bibinfo{author}{J.-P. Richard},
\newblock \bibinfo{title}{Time-delay systems: an overview of some recent
  advances and open problems},
\newblock \bibinfo{journal}{automatica} \bibinfo{volume}{39}
  (\bibinfo{year}{2003}) \bibinfo{pages}{1667--1694}.
\bibitem[{Gu and Niculescu(2003)}]{gu2003survey}
\bibinfo{author}{K.~Gu}, \bibinfo{author}{S.-I. Niculescu},
\newblock \bibinfo{title}{Survey on recent results in the stability and control
  of time-delay systems},
\newblock \bibinfo{journal}{Journal of dynamic systems, measurement, and
  control} \bibinfo{volume}{125} (\bibinfo{year}{2003})
  \bibinfo{pages}{158--165}.
\bibitem[{L{\'e}chapp{\'e} et~al.(2015)L{\'e}chapp{\'e}, Moulay, Plestan,
  Glumineau, and Chriette}]{lechappe2015new}
\bibinfo{author}{V.~L{\'e}chapp{\'e}}, \bibinfo{author}{E.~Moulay},
  \bibinfo{author}{F.~Plestan}, \bibinfo{author}{A.~Glumineau},
  \bibinfo{author}{A.~Chriette},
\newblock \bibinfo{title}{New predictive scheme for the control of lti systems
  with input delay and unknown disturbances},
\newblock \bibinfo{journal}{Automatica} \bibinfo{volume}{52}
  (\bibinfo{year}{2015}) \bibinfo{pages}{179--184}.
\bibitem[{Lin and Fang(2007)}]{lin2007asymptotic}
\bibinfo{author}{Z.~Lin}, \bibinfo{author}{H.~Fang},
\newblock \bibinfo{title}{On asymptotic stabilizability of linear systems with
  delayed input},
\newblock \bibinfo{journal}{IEEE Transactions on Automatic Control}
  \bibinfo{volume}{52} (\bibinfo{year}{2007}) \bibinfo{pages}{998--1013}.
\bibitem[{Zhou and Lin(2014)}]{zhou2014consensus}
\bibinfo{author}{B.~Zhou}, \bibinfo{author}{Z.~Lin},
\newblock \bibinfo{title}{Consensus of high-order multi-agent systems with
  large input and communication delays},
\newblock \bibinfo{journal}{Automatica} \bibinfo{volume}{50}
  (\bibinfo{year}{2014}) \bibinfo{pages}{452--464}.
\bibitem[{Yoon and Lin(2013)}]{yoon2013truncated}
\bibinfo{author}{S.~Y. Yoon}, \bibinfo{author}{Z.~Lin},
\newblock \bibinfo{title}{Truncated predictor feedback control for
  exponentially unstable linear systems with time-varying input delay},
\newblock \bibinfo{journal}{Systems \& Control Letters} \bibinfo{volume}{62}
  (\bibinfo{year}{2013}) \bibinfo{pages}{837--844}.
\bibitem[{Mirkin and Raskin(2003)}]{mirkin2003every}
\bibinfo{author}{L.~Mirkin}, \bibinfo{author}{N.~Raskin},
\newblock \bibinfo{title}{Every stabilizing dead-time controller has an
  observer--predictor-based structure},
\newblock \bibinfo{journal}{Automatica} \bibinfo{volume}{39}
  (\bibinfo{year}{2003}) \bibinfo{pages}{1747--1754}.
\bibitem[{Smith(1957)}]{smith1957close}
\bibinfo{author}{O.~J. Smith},
\newblock \bibinfo{title}{Close control of loops with dead time},
\newblock \bibinfo{journal}{Chemical engineering progress} \bibinfo{volume}{53}
  (\bibinfo{year}{1957}) \bibinfo{pages}{217--219}.
\bibitem[{Mayne(1968)}]{mayne1968control}
\bibinfo{author}{D.~Mayne},
\newblock \bibinfo{title}{Control of linear systems with time delay},
\newblock \bibinfo{journal}{Electronics Letters} \bibinfo{volume}{4}
  (\bibinfo{year}{1968}) \bibinfo{pages}{439--440}.
\bibitem[{Manitius and Olbrot(1979)}]{manitius1979finite}
\bibinfo{author}{A.~Manitius}, \bibinfo{author}{A.~Olbrot},
\newblock \bibinfo{title}{Finite spectrum assignment problem for systems with
  delays},
\newblock \bibinfo{journal}{IEEE Transactions on Automatic Control}
  \bibinfo{volume}{24} (\bibinfo{year}{1979}) \bibinfo{pages}{541--552}.
\bibitem[{Kwon and Pearson(1980)}]{kwon1980feedback}
\bibinfo{author}{W.~Kwon}, \bibinfo{author}{A.~Pearson},
\newblock \bibinfo{title}{Feedback stabilization of linear systems with delayed
  control},
\newblock \bibinfo{journal}{IEEE Transactions on Automatic control}
  \bibinfo{volume}{25} (\bibinfo{year}{1980}) \bibinfo{pages}{266--269}.
\bibitem[{Artstein(1982)}]{artstein1982linear}
\bibinfo{author}{Z.~Artstein},
\newblock \bibinfo{title}{Linear systems with delayed controls: a reduction},
\newblock \bibinfo{journal}{IEEE Transactions on Automatic control}
  \bibinfo{volume}{27} (\bibinfo{year}{1982}) \bibinfo{pages}{869--879}.
\bibitem[{Wang and Ding(2016)}]{wang2016h}
\bibinfo{author}{C.~Wang}, \bibinfo{author}{Z.~Ding},
\newblock \bibinfo{title}{H$ \infty $ consensus control of multi-agent systems
  with input delay and directed topology},
\newblock \bibinfo{journal}{IET Control Theory \& Applications}
  \bibinfo{volume}{10} (\bibinfo{year}{2016}) \bibinfo{pages}{617--624}.
\bibitem[{Wang et~al.(2017)Wang, Sun, Zuo, and Ding}]{wang_consensus_2017-1}
\bibinfo{author}{C.~Wang}, \bibinfo{author}{J.~Sun}, \bibinfo{author}{Z.~Zuo},
  \bibinfo{author}{Z.~Ding},
\newblock \bibinfo{title}{Consensus disturbance rejection of network-connected
  dynamic systems with input delay and unknown network connectivity},
\newblock \bibinfo{journal}{IFAC-PapersOnLine} \bibinfo{volume}{50}
  (\bibinfo{year}{2017}) \bibinfo{pages}{10357--10362}.
\bibitem[{Wang et~al.(shed)Wang, Zuo, Qi, and Ding}]{wangchunyanCyber}
\bibinfo{author}{C.~Wang}, \bibinfo{author}{Z.~Zuo}, \bibinfo{author}{Z.~Qi},
  \bibinfo{author}{Z.~Ding},
\newblock \bibinfo{title}{Predictor-based extended-state-observer design for
  consensus of mass with delays and disturbances},
\newblock \bibinfo{journal}{IEEE Transactions on Cybernetics}
  (\bibinfo{year}{to be published}).
\bibitem[{Ponomarev et~al.(2017)Ponomarev, Chen, and
  Zhang}]{ponomarev2017discrete}
\bibinfo{author}{A.~Ponomarev}, \bibinfo{author}{Z.~Chen},
  \bibinfo{author}{H.-T. Zhang},
\newblock \bibinfo{title}{Discrete-time predictor feedback for consensus of
  multi-agent systems with delays},
\newblock \bibinfo{journal}{IEEE Transactions on Automatic Control}
  (\bibinfo{year}{2017}).
\bibitem[{Godsil and Royle(2001)}]{godsil_algebraic_2001}
\bibinfo{author}{C.~Godsil}, \bibinfo{author}{G.~F. Royle},
  \bibinfo{title}{Algebraic {Graph} {Theory}}, \bibinfo{publisher}{Springer
  Science \& Business Media}, \bibinfo{address}{New York, NY, USA},
  \bibinfo{year}{2001}.
\bibitem[{Qu(2009)}]{qu_cooperative_2009}
\bibinfo{author}{Z.~Qu}, \bibinfo{title}{Cooperative {Control} of {Dynamical}
  {Systems}: {Applications} to {Autonomous} {Vehicles}},
  \bibinfo{publisher}{Springer Science \& Business Media},
  \bibinfo{year}{2009}.
\bibitem[{Bernstein(2009)}]{bernstein_matrix_2009}
\bibinfo{author}{D.~S. Bernstein}, \bibinfo{title}{Matrix {Mathematics}:
  {Theory}, {Facts}, and {Formulas}.}, \bibinfo{publisher}{Princeton University
  Press}, \bibinfo{year}{2009}.
\bibitem[{Corless and Leitmann(1981)}]{corless1981continuous}
\bibinfo{author}{M.~Corless}, \bibinfo{author}{G.~Leitmann},
\newblock \bibinfo{title}{Continuous state feedback guaranteeing uniform
  ultimate boundedness for uncertain dynamic systems},
\newblock \bibinfo{journal}{IEEE Transactions on Automatic Control}
  \bibinfo{volume}{26} (\bibinfo{year}{1981}) \bibinfo{pages}{1139--1144}.
\bibitem[{Khalil(1996)}]{khalil1996noninear}
\bibinfo{author}{H.~K. Khalil},
\newblock \bibinfo{title}{Noninear systems},
\newblock \bibinfo{journal}{Prentice-Hall, New Jersey} \bibinfo{volume}{2}
  (\bibinfo{year}{1996}) \bibinfo{pages}{5--1}.
\bibitem[{Wen et~al.(2016)Wen, Zhao, Duan, Yu, and Chen}]{wen_containment_2016}
\bibinfo{author}{G.~Wen}, \bibinfo{author}{Y.~Zhao}, \bibinfo{author}{Z.~Duan},
  \bibinfo{author}{W.~Yu}, \bibinfo{author}{G.~Chen},
\newblock \bibinfo{title}{Containment of {Higher}-{Order} {Multi}-{Leader}
  {Multi}-{Agent} {Systems}: {A} {Dynamic} {Output} {Approach}},
\newblock \bibinfo{journal}{IEEE Transactions on Automatic Control}
  \bibinfo{volume}{61} (\bibinfo{year}{2016}) \bibinfo{pages}{1135--1140}.
\bibitem[{Ding(2015)}]{ding2015consensus}
\bibinfo{author}{Z.~Ding},
\newblock \bibinfo{title}{Consensus disturbance rejection with disturbance
  observers},
\newblock \bibinfo{journal}{IEEE Transactions on Industrial Electronics}
  \bibinfo{volume}{62} (\bibinfo{year}{2015}) \bibinfo{pages}{5829--5837}.
\bibitem[{Cao et~al.(2012)Cao, Ren, and Egerstedt}]{cao_distributed_2012}
\bibinfo{author}{Y.~Cao}, \bibinfo{author}{W.~Ren},
  \bibinfo{author}{M.~Egerstedt},
\newblock \bibinfo{title}{Distributed containment control with multiple
  stationary or dynamic leaders in fixed and switching directed networks},
\newblock \bibinfo{journal}{Automatica} \bibinfo{volume}{48}
  (\bibinfo{year}{2012}) \bibinfo{pages}{1586--1597}.
\bibitem[{Isidori(1995)}]{isidori1995nonlinear}
\bibinfo{author}{A.~Isidori},
\newblock \bibinfo{title}{Nonlinear control systems}  (\bibinfo{year}{1995}).
\bibitem[{Krstic et~al.(1995)Krstic, Kokotovic, and
  Kanellakopoulos}]{krstic_nonlinear_1995}
\bibinfo{author}{M.~Krstic}, \bibinfo{author}{P.~V. Kokotovic},
  \bibinfo{author}{I.~Kanellakopoulos}, \bibinfo{title}{Nonlinear and
  {Adaptive} {Control} {Design}}, \bibinfo{publisher}{John Wiley \& Sons,
  Inc.}, \bibinfo{address}{New York, NY, USA}, \bibinfo{edition}{1st} edition,
  \bibinfo{year}{1995}.

\end{thebibliography}







\end{document}